\documentclass[conference]{IEEEtran}

\ifCLASSINFOpdf
\else
\fi

\PassOptionsToPackage{prologue, x11names}{xcolor}
\usepackage[table]{xcolor}
\usepackage{amssymb}
\usepackage{amsmath,amsfonts}
\usepackage{graphics}
\usepackage{textcomp}
\usepackage{xcolor}
\usepackage{hyperref}

\usepackage{tikz}
\usepackage{multirow}
\usepackage{amsthm}
\newtheorem{theorem}{Theorem}

\newtheorem{lemma}{Lemma}
\usepackage{paralist}
\usepackage[ruled,vlined]{algorithm2e}
\usepackage{url}

\usepackage{threeparttable}
\usepackage{booktabs}
\usepackage{makecell}
\usepackage{wasysym}
\usepackage{enumitem}
\usepackage{comment}
\usepackage{adjustbox}
\usepackage{pifont}

\def\Rbb{\mathbb{R}}
\def\Ebb{\mathbb{E}}
\def\local{\mathsf{local}}
\def\globalsf{\mathsf{global}}
\def\jsequ{S_{\mathsf{Jac}}}
\newcommand{\novel}{\textcolor{blue}{\(\blacklozenge\)}}
\newcommand{\systemNumber}{FedAVG-N\xspace}
\newcommand{\system}{\textsc{Entente}\xspace}
\newcommand\zl[1]{{\color{purple}{\textbf{\{ZL: {\em#1}\}}}}}
\newcommand\jxu[1]{{\color{orange}{\textbf{\{Jason: {\em#1}\}}}}}

\newcommand{\zztitle}[1]{\noindent\textbf{#1 }}
\newcommand{\ignore}[1]{}

\usepackage{tablefootnote}

\definecolor{colorbest}{named}{DodgerBlue2}
\definecolor{colorsecond}{named}{SkyBlue1}
\definecolor{colortext}{named}{white}

\newcommand{\greenup}{$\mathrel{\hspace{0.04cm}}$\rotatebox{90}{\color{green}\ding{225}}}
\newcommand{\reddown}{\textcolor{purple}{$\blacktriangledown$}}

\newcommand{\tbspace}{\mathrel{\hspace{0.18cm}}}

\newcommand{\revisednew}[1]{{\color{black}{{#1}}}}

\hyphenation{op-tical net-works semi-conduc-tor}

\begin{document}

\thispagestyle{plain}
\pagestyle{plain}

\title{\system: Cross-silo Intrusion Detection on Network Log Graphs with Federated Learning}

\author{%
  \IEEEauthorblockN{%
    Jiacen Xu\IEEEauthorrefmark{1}\textsuperscript{\textsection},
    Chenang Li\IEEEauthorrefmark{2},
    Yu Zheng\IEEEauthorrefmark{2} and
    Zhou Li\IEEEauthorrefmark{2}%
  }%
  \IEEEauthorblockA{\IEEEauthorrefmark{1} Microsoft}%
  \IEEEauthorblockA{\IEEEauthorrefmark{2} University of California, Irvine}%
}

\IEEEoverridecommandlockouts
\makeatletter\def\@IEEEpubidpullup{6.5\baselineskip}\makeatother
\IEEEpubid{\parbox{\columnwidth}{
		Network and Distributed System Security (NDSS) Symposium 2026\\
		23 - 27 February 2026 , San Diego, CA, USA\\
		ISBN 979-8-9919276-8-0\\  
		https://dx.doi.org/10.14722/ndss.2026.230093\\
		www.ndss-symposium.org
}
\hspace{\columnsep}\makebox[\columnwidth]{}}

\maketitle

\begingroup\renewcommand\thefootnote{\textsection}
\footnotetext{This work was done when the author was a PhD student at UC Irvine.}
\endgroup

\begin{abstract}
Graph-based Network Intrusion Detection Systems (GNIDS) have gained significant momentum in detecting sophisticated cyber-attacks, such as Advanced Persistent Threats (APTs), within and across organizational boundaries.
Though achieving satisfying detection accuracy and demonstrating adaptability to ever-changing attacks and normal patterns, existing GNIDS  predominantly assume a centralized data setting. 
However, flexible data collection is not always realistic or achievable due to increasing constraints from privacy regulations and operational limitations.

We argue that the practical development of GNIDS requires accounting for distributed collection settings and we leverage Federated Learning (FL) as a viable paradigm to address this prominent challenge.
We observe that naively applying FL to GNIDS is unlikely to be effective, due to issues like
graph heterogeneity over clients and the diverse design choices taken by different GNIDS. 
We address these issues with a set of novel techniques tailored to the graph datasets, including reference graph synthesis, graph sketching and adaptive contribution scaling, eventually developing a new system \system. 
By leveraging the domain knowledge, \system can achieve effectiveness, scalability and robustness simultaneously.
Empirical evaluation on the large-scale LANL, OpTC and Pivoting datasets shows that \system outperforms the SOTA FL baselines.
We also evaluate \system under FL poisoning attacks tailored to the GNIDS setting, showing the robustness by bounding the attack success rate to low values.
Overall, our study suggests a promising direction to build cross-silo GNIDS.
\end{abstract}

\IEEEpeerreviewmaketitle

\section{Introduction}
\label{sec:intro}

The techniques and scale of modern cyber-attacks are evolving at a rapid pace. More high-profile security breaches are observed against large organizations nowadays. One prominent attack strategy is Advanced Persistent Threat (APT)~\cite{killchain}, which 
establishes multiple attack stages and infiltrates multiple organizational assets through techniques like lateral movement. 
As a popular countermeasure, many organizations %
collect network logs (e.g., firewall and proxy logs) and perform intrusion detection on them~\cite{li2016operational}. To more precisely capture the distinctive network communication patterns of the attack, a promising approach is to model the network logs as a graph and apply graph-based algorithms to detect abnormal entities, interactions or communities. We term such system \textit{Graph-based Network Intrusion Detection System (GNIDS)}, and we observe that the recent works~\cite{xu2024understanding,king2023euler,khoury2024jbeil,paudel2022pikachu,liu2019log2vec,bowman2020detecting,cheng2021step,sun2022hetglm,rabbani2024graph,qiu20233d} prefer advanced graph models like graph autoencoder (GAE)~\cite{kipf2016variational} to build their systems, showing much higher detection accuracy over the traditional NIDS and capabilities of detecting sophisticated attacks like lateral movement~\cite{khoury2024jbeil}.

\zztitle{Regulation compliance concerns for GNIDS.}
Given the sensitive nature of network logs, such as revealing communication patterns of employees and organizations~\cite{imana2021institutional}, privacy regulations have to be followed when training GNIDS models with logs from multiple regions. 
For example, Menges et al. state that SIEM needs to be compliant with Europe’s General Data Protection Regulation (GDPR)~\cite{menges2021towards}, which covers the data processing and transfer ``within and between private companies and/or public bodies in the European member states''. Outside of Europe,
Singapore’s Personal Data Protection Act (PDPA) prohibits using data for purposes beyond its original intent without explicit individual consent~\cite{pdpa}, creating barriers for training models on network logs. 

\revisednew{
In fact, the data collection capabilities of a cyber-security company have already been restricted under data privacy regulations.  
For instance, Palo Alto Networks (PANW) offers a Strata Logging Service that enables enterprises to send on-premise firewall logs to the cloud for centralized analysis and management. 
However, PANW explicitly states that if regulations mandate data residency, customers must ensure that logs are stored in region-appropriate cloud instances to comply with jurisdictional boundaries~\cite{panw-strata}.
In such cases, logs cannot be aggregated across regions, making centralized analysis on logs impractical. 
Similarly, Microsoft’s Windows Defender XDR stores customer data, such as alerts, in regional Microsoft Azure data centers (e.g., EU, UK, US), and its documentation confirms that cloud tenants cannot be relocated across regions once created~\cite{ms-xdr}.
Our communication with product representatives verified that cross-region data centralization is unsupported.
}

\zztitle{Federated Learning for GNIDS and \system.}
The aforementioned compliance issue calls for a new paradigm that allows the development of GNIDS over geographically distributed logs while aligning with various privacy regulations. 
One promising solution is Federated Learning (FL), which has gained prominent attention from academia and industry~\cite{yang2019federated}.
In essence, FL allows the individual data owners (e.g., a device owner or an organization) to keep their data on premise and jointly train a global model by exchanging parameters of local models. 

Given its successes in addressing privacy concerns of data collection~\cite{yang2019federated},
we pivot the research of developing a practical FL-empowered GNIDS and term our new system \system. 

We argue that \system should satisfy three main design goals: 
\textit{effectiveness} (similar effectiveness as the GNIDS trained on the entire dataset), \textit{scalability} (the overhead introduced by the FL mechanism should be small and a large number of FL clients should be supported) and \textit{robustness} (maintaining detection accuracy against attackers who compromise the FL procedure). Since the data to be processed by GNIDS usually have imbalanced classes (e.g., malicious events are far less than the normal events) and non-IID (not independent and identically distributed) across FL clients, based on our survey, unfortunately \textit{none} of the existing FL methods are able to achieve these goals altogether. For instance, sharing neighborhood information through Fully Homomorphic Encryption (FHE) could mitigate the accuracy loss on the non-IID clients' graphs~\cite{yao2022fedgcn}, but doing so will introduce prominent overhead. Clipping the clients' contributions can curb the poisoning attack~\cite{sun2019can}, but the training convergence and model accuracy will be affected adversely~\cite{icml/ZhangCH0Y22}. 

\textit{Is it possible to build a federated GNIDS that achieves effectiveness, scalability and robustness all together, rather than sacrificing one goal for another?}

We answer firmly to this question by designing a new Federated Graph Learning (FGL) protocol for GNIDS. 1) We found that by sharing a small piece of clients' information, i.e., the aggregated node number, the central parameter server can effectively initialize the clients' initial weights, and mitigate the impact of non-IID clients. Such information is often already accessible within an organization so no extra privacy leakage will be introduced. 
We instantiate this idea with a new FL bootstrapping stage based on \textit{reference graph synthesis} and \textit{graph sketching}, which only involve lightweight computation on the parameter server and FL clients. 
2) We found that each client can self-adjust its contribution based on the divergence between client-to-global models, and we developed a new technique termed \textit{adaptive contribution scaling (ACS)} to instantiate this idea. 
3) The attacker needs to scale up the model updates to effectively poison the trained global model. 
Since the clients' weights are adjusted under ACS already, we can bound the model updates by tweaking ACS. Interestingly, such a combination enables \textit{dynamic clipping}, which can address the limitation of static clipping~\cite{icml/ZhangCH0Y22}. 
Through theoretical analysis,
we \textit{formally prove} that the iteration-wise difference shifting is bounded under ACS, and convergence rate is still bounded.  
This new theoretical result suggests our protocol could be useful for other FGL applications aiming to achieve robustness on non-IID clients.

\zztitle{Evaluation of \system.}
We conduct an extensive evaluation on \system, focusing on its effectiveness in detecting abnormal interactions between entities. We adapt \system to two exemplar GNIDS, namely Euler~\cite{king2023euler} and Jbeil~\cite{khoury2024jbeil}, as they embody quite different designs. 
We choose three real-world large-scale log datasets, OpTC~\cite{optc}, LANL Cyber1 (or LANL)~\cite{lanl-ds-15}  and Pivoting~\cite{apruzzese2017detection} for evaluation and simulate different client numbers.
In summary, \system can boost the performance of both GNIDS models on all the datasets.
On OpTC, \system outperforms all the other baseline FL methods
and \textit{even the non-FL version} (the GNIDS is trained using all data) with over 0.1 increase of average precision (AP). On LANL and Pivoting, when link prediction is conducted by Jbeil, high AP and AUC can be reached by \system (over 0.9 in most cases), for both transductive learning and inductive learning modes. On LANL when Euler is used, given only hundreds of redteam events are used for edge classification, the AP is low for all FL methods, but \system still outperforms the other methods in most cases.

We also consider the robustness of \system under adaptive attacks and consider model poisoning~\cite{bagdasaryan2020backdoor} as the main threat. We develop a \textit{new} poisoning attack that scales up the model updates~\cite{bagdasaryan2020backdoor} and adds covering edges~\cite{xu2023cover} to the GNIDS setting, by replaying \ignore{``copying'' \jxu{I want to use replay}} malicious edges from the testing period to the training period. 
Since \system integrates norm bounding when adjusting clients' weights, the attack success rate is bounded to a very low rate, e.g., less than 10\% when attacking Euler+LANL. Without norm bounding, not only does attack success rate increase, but the FL training process might not even finish when the attacker scales the model updates by a very large ratio.

Overall, our study shows promise in addressing the data sharing concerns in building GNIDS in practice.

\zztitle{Contributions.} We summarize the contributions of this paper as follows:

\begin{itemize}[noitemsep,nolistsep]
    \item We propose a new system \system that can train a GNIDS model without requesting data sharing among departments/ organizations, under the framework of FL.
    \item We address the threats like non-iid client graphs and adaptive attackers with novel techniques like reference graph synthesis, graph sketching and adaptive contribution scaling. We also formally prove that the convergence rate is bounded. 
    \item We conduct the extensive evaluation using large-scale log datasets (LANL, OpTC and Pivoting), and the result shows \system can outperform other baselines in most cases.
    \item We release the code at \url{https://github.com/uci-dsp-lab/ENTENTE}. 
\end{itemize}

\section{Background}
\label{sec:background}

\subsection{Graph-based Network Intrusion Detection Systems (GNIDS)}
\label{subsec:gsa}

Network logs collected by 
network appliances like firewalls and proxies have been extensively leveraged to detect various cyber-attacks, including APT attacks~\cite{killchain}. 
Many graph-based approaches have been developed in recent years and we term them Graph-based Network Intrusion Detection Systems (GNIDS).
At the high level, for each log entry, the GNIDS extracts the subject and object fields (e.g., host) as nodes, and fills the edge attributes using the other fields (e.g., the instruction contained in the network packet). In Figure~\ref{fig:optc}, we illustrate an example of a graph generated from network logs collected by different organizations (or clients). 

\begin{figure}[h]
    \centering
    \includegraphics[width=0.45\textwidth,page=3]{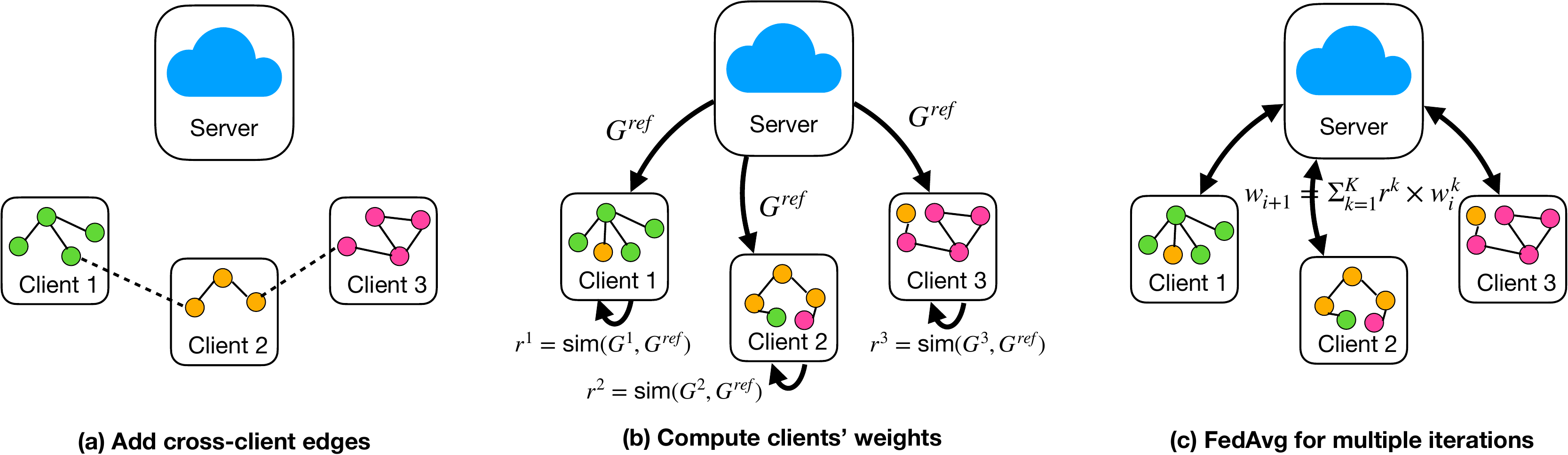}
    \caption{An example summarized from the day 1 attack campaign in the OpTC dataset~\cite{optc}.  The attacker first connects to \texttt{machine0201} and downloads the PowerShell attack tool. Then it pivots to machine \texttt{machine0401} and \texttt{machine0660} with the Windows WMI command. Finally, the attack spreads to other machines. The attacked machines can belong to multiple organizations (or clients). Graphs can be constructed separately from the logs collected by different clients.
    }
    \label{fig:optc}
    \vspace{-3mm}
\end{figure}

On top of the graph data, GNIDS can perform anomaly detection with a trained model.
The relevant works can be divided by their classification targets: sub-graphs (e.g., a graph snapshot), nodes (e.g., a host) and edges (e.g., interactions between hosts). 
In order to accurately model the patterns in the graph data, most works choose a Graph Neural Network (GNN). One prominent technique is \textit{graph autoencoder (GAE)}~\cite{kipf2016variational}, which uses a \textit{graph encoder} to generate node embedding and a \textit{graph decoder} to reconstruct a similar graph. The downstream tasks like edge classification can be done by generating edge scores from node embedding and comparing them with a threshold. In Section~\ref{sec:design}, we elaborate the common design choices of GNIDS models when describing the workflow of \system.

Noticeably, a relevant line of research is provenance- or host-based intrusion detection system (PIDS or HIDS)~\cite{zipperle2022provenance, inam2023sok}, which detects intrusions on the \textit{host log graph}. In Section~\ref{sec:related}, we discuss this line and the potential changes to our system \system for adaptation.

\subsection{Federated Learning}
\label{subsec:fl}

Federated Learning (FL) is an emerging technique to allow multiple clients to train a model without revealing their private data~\cite{konevcny2016federated, mcmahan2017communication}. FL relies on a central parameter server to train a global model under multiple iterations. At the start of each iteration $i$, the server transmits a global model  ($w_i$) to a set of clients ($1, \dots, K$) and they train local models ($w_i^1, \dots, w_i^K$) from $w_i$. Then the clients transmit the local models to the server to average their differences (e.g., FedAVG on model parameters~\cite{mcmahan2017communication}) and generate a new $w_{i+1}$. 

FL has two main deployment settings: \textit{horizontal FL} and \textit{vertical FL}~\cite{yang2019federated}. For horizontal FL, the clients' datasets have a large overlap in the same feature space but little overlap in the sample space  (e.g., every client owns the same type of network logs of its sub-network machines). On the contrary, vertical FL assumes the clients have a large overlap in the sample space but little overlap in the feature space (e.g., each client collects a unique type of logs for all organizational machines). In this work, we focus on horizontal FL, which has been studied more often~\cite{yang2019federated}. 
We also focus on the \textit{cross-silo} FL setting, in which a small number of \textit{organizations} participate in the \textit{entire} training process~\cite{huang2022cross}, rather than the cross-device FL setting, in which many user-owned devices selectively participate in different training iterations.

\zztitle{Federated Graph Learning (FGL).}
Initially, FL was developed for tasks related to Euclidean data like image classification~\cite{mcmahan2017communication}. Recently, FL has been applied to non-Euclidean data like graphs~\cite{liu2022federated, fu2022federated}, and these works are termed under \textit{Federated Graph Learning (FGL)}. Under the horizontal FL setting, the graph data is partitioned across clients, where each client has a sub-graph with non-overlapping (or little overlapping) nodes. 
A prominent challenge for sub-graph FL is the heterogeneity between clients' subgraphs, such that the sizes and topology are vastly different between subgraphs. While there are general solutions to address the data heterogeneity issues under FL~\cite{fedopt,fedprox}, some solutions are customized to sub-graph FL~\cite{li2023fedgta,wang2024fedsg}. %
For example, 
FedGTA proposed a topology-aware optimization strategy for FGL~\cite{li2023fedgta}, but it requires heavy changes on the design of existing graphical models. 

Alternatively, some works propose to \textit{amend} each subgraph with some information shared by other clients or server~\cite{zhang2021fastgnn, chen2021fedgl, zhang2021subgraph, peng2022fedni, zhu2024fedtad}. For example, the server in FedGL asks the clients to upload node embeddings to generate a global pseudo embedding~\cite{chen2021fedgl}. FedSage+ asks the clients to train a neighborhood generator jointly~\cite{zhang2021subgraph}. However, there is no guarantee that the shared information will not leak more clients' information (e.g., node embedding can lead to inference attacks~\cite{zhang2022inference}). 
To mitigate the privacy concerns, cryptographic primitives and/or differential privacy have been tested to amend the subgraph in a privacy-preserving way~\cite{yao2022fedgcn, zhang2021subgraph, qiu2022privacy,  wu2021fedgnn}. For example, FedGCN allows a client to collect 1-hop or 2-hop averaged neighbor node features from clients with Fully Homomorphic Encryption (FHE)~\cite{yao2022fedgcn}. However, significant computation and communication overhead will be incurred. 
In this work, we develop a new FGL technique to tackle the subgraph heterogeneity issue and apply FL to GNIDS in practice.

\section{Overview}
\label{sec:overview}

In this section, we first describe the deployment settings of our system \system. Then, we demonstrate the goals to be achieved by \system.
Finally, we describe the threat model.

\subsection{Deployment Settings}
\label{subsec:problem}

We assume an organization consists of multiple sub-organizations, but it is not always feasible for them to share the raw logs with each other, e.g., when they are located in disjoint regions that are governed by privacy laws like GDPR, as elaborated in Section~\ref{sec:intro}.
Each sub-organization collects the logs about the network packets sent to and received by its controlled machines, with systems like SIEM~\cite{siem}, and trains a local GNIDS model to detect past or ongoing attacks by analyzing the logs.
To achieve better detection coverage and effectiveness, they decide to perform FL to jointly train a global GNIDS model that can be used by each participating sub-organization.
The same procedure can be taken by multiple independent organizations to train a global GNIDS model.

Here we formally define the entities that deploy our system. Figure~\ref{fig:optc} also illustrates the setup.

\begin{itemize}[noitemsep,nolistsep]
    \item A \textbf{client} is the sub-organization that collects logs from its managed machines and trains a GNIDS model to perform intrusion detection.
    \item A \textbf{parameter server} is operated by an entity outside the clients (e.g., the parent organization of the clients) to aggregate the clients' model updates and push the global model to clients.
    \item A \textbf{machine} owned by a client is subjected to attacks. It produces network logs that are collected by the client. Each machine is also called a \textbf{node} under the client graph.
\end{itemize}

\subsection{Design Goals and Challenges}
\label{subsec:goals}

When designing \system, we identify several goals that should be achieved to enable its real-world deployment.

\begin{itemize}[noitemsep,nolistsep]
    \item \textbf{Effectiveness.} \system should achieve high detection accuracy and precision on large-scale real-world logs. Achieving high precision is more important due to the imbalanced data distribution in the log dataset~\cite{quiring2022and}. \system should achieve comparable effectiveness as the GNIDS trained on the entire log dataset.
    \item \textbf{Scalability.} The introduced FL mechanisms should be scalable when training a global model from large-scale log datasets owned by many clients. The communication overhead and latency added by \system on each client should be small.
    \item \textbf{Robustness.} In addition to compromising the client machines, the attacker has the motivation to compromise the FL procedure. \system should be able to defend against such adaptive attacks.
\end{itemize}

\zztitle{Challenges and solutions.}
The major challenge is the heterogeneity among clients. Previous studies have discovered that when clients' data are non-IID (independent and identically distributed), the effectiveness and robustness of the trained global model can be significantly degraded~\cite{li2019convergence}. In our case, it is very likely that each client has divergent subgraphs in terms of size and topology. As supporting evidence, Dambra et al. studied the malware encounters using the telemetry data from a security company, and it shows enterprises in the United States have more than 5x monitored end-hosts than any other country~\cite{dambra2023comparison}.

We found that when clients' data are non-IID graphs, none of the prior FL (e.g., FedAvg) or FGL (e.g., FedGCN) approaches can achieve the aforementioned goals altogether, as they have to \textit{sacrifice one goal for another}.
For example, FedAvg is highly scalable but performs poorly under non-IID data~\cite{li2019convergence}.
FedGCN ~\cite{yao2022fedgcn} aims to address the effectiveness challenge on the heterogeneous client data with heavyweight cryptographic methods, which sacrifices the system scalability.
Norm bounding~\cite{sun2019can} improves the robustness by clipping abnormal model updates, but prior research also demonstrated it will slow down the convergence of FL training and decrease the effectiveness of the trained model when a static clipping threshold is used~\cite{icml/ZhangCH0Y22}.

Hence, new FGL methods are desired in our setting, and our key observation is that by sharing a small piece of information from clients to the parameter server, i.e., the aggregated amount of nodes, the parameter server can adjust the clients' contributions automatically to offset the impact of non-IID graphs. Besides, a client can perform self-adjustment of its contribution based on the divergence of client-to-global model parameters. Since the contributions are dynamically adjusted, the limitation of defenses like norm bounding can be remedied. Overall, through new FGL protocols that are carefully designed around non-IID client graphs, our system \system achieves the three design goals \textit{all together} for the first time.

\subsection{Threat Model}
\label{subsec:threat}

First, we follow the threat model of the other GNIDS (e.g., ~\cite{king2023euler, khoury2024jbeil}) that though the machines can be compromised, the network communications are correctly logged by the network appliances. Hence, log integrity can be achieved. Though it is possible that advanced attackers could violate this assumption, additional defenses (e.g., using Trusted Execution Environment) can be deployed as a countermeasure~\cite{bates2015trustworthy, paccagnella2020custos, gandhi2023rethinking}.

Second, we assume the central parameter server is honest-but-curious, which is trustworthy for aggregation but may be curious about clients' local data (e.g., communications between two employees of a sub-organization).
To mitigate privacy leakage, we only allow the parameter server to know the total number of nodes aggregated from all clients, in addition to the clients' model updates required by FL.
\revisednew{We argue that the total node number is often accessible within an organization: %
e.g., an administrator can get the list of users across sub-organizations from the central active directory server~\cite{get-aduser}), so no extra privacy leakage is expected.} %
\revisednew{In Section~\ref{sec:discussion}, we discuss the privacy threats potentially caused by deploying \system,
including membership inference attack and gradient inversion attack,}
and provide preliminary analysis under the lens of differential privacy (DP).

Finally, we consider the attacker who is able to launch the poisoning attack by controlling one or more FL clients.
The controlled clients are subject to data poisoning~\cite{bagdasaryan2020backdoor} (e.g., the adversary commands some compromised machines to initiate covering communications during the training stage) or model poisoning~\cite{bagdasaryan2020backdoor} (e.g., the adversary manipulates the updates of local models). In Section~\ref{subsec:poison}, we introduce a concrete attack against GNIDS in the FL setting, and demonstrate \system is an effective defense.

\section{Design of \system}
\label{sec:design}

\begin{figure*}[t]
    \centering
    \includegraphics[width=1\textwidth,page=6]{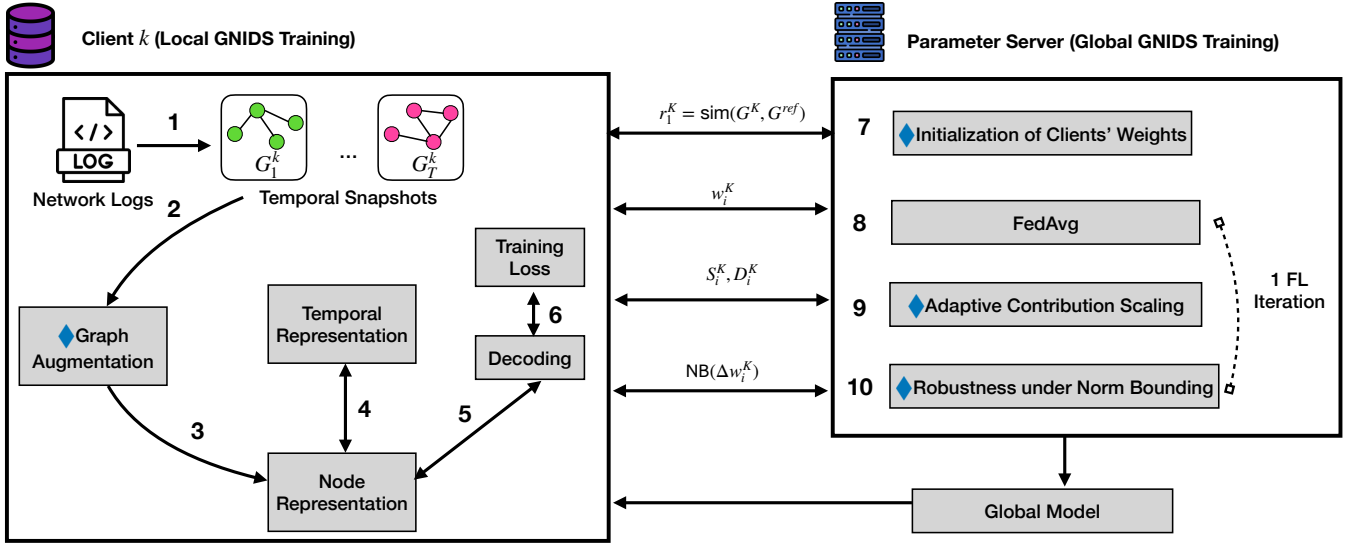}
    \caption{The workflow of \system. The client $k$ trains a GNIDS model locally and communicates with the  parameter server to jointly learn a global GNIDS with other clients. Function $\text{Sim}$ computes graph similarity. 
    }
    \label{fig:workflow}
\end{figure*}

In this section, we describe \system in detail. \system encompasses GNIDS components that are adapted from the existing works and FL components that train the GNIDS models. We highlight FL-related components with ``\novel{}'' as they are the key contributions of this work. The high-level workflow of \system is illustrated in Figure~\ref{fig:workflow}. 
The main symbols and their meanings are summarized in Table~\ref{tab:symbol_table}.

\begin{table}[h!]
    \centering
    \caption{The main symbols used in the paper.}    \setlength\tabcolsep{15pt}
    \begin{tabular}{l|c}
        \hline
        Term(s)  & Symbol(s) \\\hline
        Client number, index & $K$, $k$ \\       
        Client graph & $\mathcal{G}^k$ \\
        Client edges, edge & $\mathcal{E}^k$, $e^k$ \\
        Client nodes, node  & $\mathcal{V}^k$, $v^k$ \\
        Snapshot number, index & $T$, $t$ \\
        Client snapshot & $\mathcal{G}^k_t$ \\
        Client model parameters & $w^k$ \\
        FL max and current iteration & $R$, $i$ \\     
        Weight of a client update & $r^k$ \\  
        Global model after $i$ iterations & $w_{i+1}$\\
        \hline
    \end{tabular}

    \label{tab:symbol_table}
\end{table}

\subsection{Local Graph Creation} 
\label{subsec:local_graph}

We assume $K$ clients have deployed the same GNIDS and they jointly train a global model with FL. Each client is indexed by $k \in [1, K]$. After the network logs are collected by client $k$, a graph $\mathcal{G}^k$ will be constructed by representing the event sources and destinations (machines/users/etc.) as nodes $\mathcal{V}^k$ and their communications as edges $\mathcal{E}^k$. An edge $e^k$ between a pair of nodes $v^k_1$ and $v^k_2$ could contain features extracted from one or many events that have both $v^k_1$ and $v^k_2$, and the commonly used features include the event frequency~\cite{king2023euler}, traffic volume of network flows~\cite{xu2024understanding}, etc. Each node has a feature vector, which can be the node type (e.g., workstation or server), privilege, etc.~\cite{king2023euler}

Though it is relatively straightforward to generate a single static graph from all events~\cite{zeng2022shadewatcher}, such graph modeling has prominent issues like the coarse detection granularity and missing the unique temporal patterns~\cite{xu2024understanding}. Recent works advocate dynamic graph modeling that generates a sequence of \textit{temporal snapshots} $[\mathcal{G}_1, \cdots,$ $\mathcal{G}_T]$, and a snapshot $\mathcal{G}_t$ ($t \in [1, T]$) merges the events in a fix-duration time window (e.g., one hour)~\cite{xu2024understanding, king2023euler, khoury2024jbeil}. \system generates a dynamic graph by default, but it can easily be switched to static modeling by merging the nodes and edges of $[\mathcal{G}^k_1, \cdots, \mathcal{G}^k_T]$, yielding $\mathcal{G}^k$.

\zztitle{\novel{}Augmenting local graph.}
Missing cross-client edges is a prominent issue for subgraph FL, and previous works tried to amend the subgraph by exchanging the topological information among clients, which however led to privacy and efficiency issues, as surveyed in Section~\ref{subsec:fl}.
We found that this problem can be partially addressed under the unique GNIDS deployment setting, with \textit{1-hop graph augmentation}. 
Our key insight is that the log collectors deployed by a client, like firewalls, usually record the inbound and outbound cross-client events  \textit{automatically}~\cite{li2016operational}. 
Therefore, the cross-client edges can be harvested ``for free'' from each client's internal logs.

Specifically, assume $\mathcal{V}_t^k$ and $\mathcal{E}_t^k$ are nodes and edges of the snapshot $\mathcal{G}_t^k$ of client $k$.
We search the logs to find all entities that are not contained by $\mathcal{V}_t^k$ while following the constraints of $\mathcal{V}_t^k$ (e.g., in the same snapshot period), denoted as $\mathcal{V}_t^{k\complement}$. Then $\mathcal{V}_t^{k\complement}$ will be added into $\mathcal{G}_t^k$, together with $\mathcal{E}_t^{k\complement}$ (the edges between $\mathcal{V}_t^k$ and $\mathcal{V}_t^{k\complement}$).

\subsection{Local GNIDS Training}
\label{subsec:local_training}

\zztitle{Node representation learning.}
On a generated graph snapshot $\mathcal{G}^k_t$, \system extracts the representation (or embedding) of each node with a  \textit{graph auto-encoder (GAE)}~\cite{kipf2016variational}, which aggregates the node's features and its neighborhood information. 
Graph Convolutional Network (GCN)~\cite{kipf2016semi} is a standard choice~\cite{king2023euler}, which takes the whole adjacency matrix as the input, but only transductive learning, which assumes the nodes are identical between training and testing, is supported. In our setting, the GCN encoder can be written as:
\begin{equation}\label{eq:gcn}
    Z^k_t=\text{GCN}(X^k_t, A^k_t)
\end{equation}
where $Z^k_t \in \mathbb{R}^{n \times r}$ is the node embedding generated by the encoder on $\mathcal{G}^k_t$, $X^k_t$ represents the node features and $A^k_t$ represents the adjacency matrix on $\mathcal{G}^k_t$.

To accommodate the new nodes observed in the testing phase, inductive learning has been proposed, which learns the neighborhood aggregator to generate node embeddings. GraphSage~\cite{hamilton2017inductive} is a classic encoder in this case, but recent GNIDS~\cite{khoury2024jbeil} also integrates other encoders like Temporal Graph Networks (TGN)~\cite{rossi2020temporal}. The TGN encoder deals with a continuous-time dynamic graph, and for each time $t$, the node embedding $Z^k_t$ is generated by a trainable message function, message aggregator and memory updater.

\zztitle{Temporal learning.}
The anomalous temporal patterns (e.g., a login attempt outside of work hours) provide important indicators for intrusion detection, and some recent GNIDS~\cite{king2023euler, xu2024understanding} choose to capture such patterns with Recurrent Neural Network (RNN), e.g., Gated Recurrent Unit (GRU) under RNN families.
The RNN module can be written as:
\begin{equation}\label{eq:encoder}
    [Z^k_1, \dots, Z^k_T] = \text{RNN}([\text{ENC}(\mathcal{G}^k_1), \dots,  \text{ENC}(\mathcal{G}^k_T)])
\end{equation}
where $\mathcal{G}^k_t = (X^k_t, A^k_t)$ ($t \in [1, T]$)
, $\text{ENC}$ is the encoder (e.g., GCN), and $[Z^k_1, \dots, Z^k_T]$ are the embeddings updated by RNN after they are encoded from  $[\mathcal{G}^k_1, \dots, \mathcal{G}^k_T]$.

When TGN is the encoder, the temporal pattern is directly handled by its memory module, which can be a vanilla RNN and GRU~\cite{rossi2020temporal}. Hence, no extra RNN layer is needed.

Both GCN+RNN and TGN encoders (and other encoders used by GNIDS) are supported by \system when they are trained under FL and we explain this feature in Section~\ref{subsec:proposed_fl}.

\zztitle{Decoding.}
On the node representations, the decoder aims to reconstruct the adjacency matrix with edge scores.
The basic decoder is inner product decoder~\cite{king2023euler}, which can be written as:
\begin{equation}\label{eq:decoder}
    \text{DEC}(Z^k_t) = \Pr(A^k_{t+n}=1 | Z^k_t) = \sigma(Z^k_t Z_t^{k\intercal})
\end{equation}
where $\sigma(\cdot)$ is the logistic sigmoid function and $\Pr(A_{t+n}=1 | Z_t)$ is the reconstructed adjacency matrix at time $t+n$  given $Z_t$. The probability at each matrix cell is used as an edge score.

The simplicity of the inner product could lead to prominent reconstruction loss, and more complex, trainable decoders like Multilayer Perceptron (MLP) have been used by GNIDS~\cite{xu2024understanding,khoury2024jbeil}. We design \system to support different variations of decoders as elaborated in Section~\ref{subsec:proposed_fl}.

\zztitle{Training loss.}
When training a GNIDS, the weights of trainable components, including the graph encoder, RNN and decoder, are updated under a loss function. The typical choice is cross-entropy loss, which can be written as: %
\begin{equation}\label{eq:dec_loss}
    \mathcal{L} = -\log(\Pr(A^k_t|\hat{A^k_t})) 
\end{equation}
where $\hat{A^k_t}$ is the adjacency matrix decoded from node embeddings.

When the GNIDS is trained with only benign events in unsupervised learning mode, 
negative sampling~\cite{mikolov2013distributed} can be applied to randomly select non-existent data points as the malicious samples (e.g., non-existent edges~\cite{king2023euler,xu2024understanding}).

\subsection{Federated GNIDS Training}
\label{subsec:proposed_fl}

We aim to support different GNIDS graph modeling, downstream tasks, training/testing setup, as described in Section~\ref{subsec:goals}. Hence, \system is designed to augment the existing GNIDS without heavy adjustment of their components, and the FGL works that redesign the local models~\cite{meng2021cross,li2023fedgta} are not suitable. Based on our survey of existing FL frameworks, we are motivated to build \system on top of FedAVG, because 1) it only needs clients' model parameters as input, and 2) it has demonstrated effectiveness when the clients' models are GNN~\cite{wang2022federatedscope} and RNN~\cite{hard2018federated}. Since the common GNIDS components like node representation learning, temporal learning and decoding all use the same set of training data, we could train them independently with FedAvg and update their model parameters.
As such, \system not only incurs minimum development overhead, but also achieves similar or even better effectiveness compared to the original GNIDS, as supported by the evaluation (Section~\ref{sec:evaluation}). We also want to point out that the standard FedAvg workflow leads to unsatisfactory performance and we elaborate on the key adjustment below.

\zztitle{\novel{}Initialization of clients' weights.}
FedAVG aggregates the model parameters of GNIDS components in a number of iterations as described in Section~\ref{subsec:fl}. Its basic version in one iteration can be represented as:
\begin{equation}\label{eq:fedavg}
    w_{i+1} = \Sigma^{K}_{k=1} r^k \times w^k_i
\end{equation}
where $i$ is the current iteration, $w^k_i$ is the local model parameters of client $C_k$ that is trained with the global model of previous iteration $w_i$ and its local data, $r^k$ is the weight of client $k$ during aggregation, and $w_{i+1}$ is the global model parameters after aggregation. As described in Section~\ref{subsec:goals}, the clients' data are non-IID, which has a prominent impact on the performance of FL algorithms.  
To tackle non-IID data, one feasible approach is to assign different weights to the model updates from different clients, so the impact from clients with outlying distributions can be contained. Previous works have used the number of data samples (e.g., images~\cite{mcmahan2017communication}) per client for weights, but in our case, each client only has \textit{one} sub-graph. 

We address this challenge with a new method to initialize the client's weights based on its \textit{graph properties}.
We minimize the communication overhead and privacy leakage by computing the client's weights on top of \textit{a single reference graph} generated by the parameter server. 
This reference graph stays the same for the whole training process and across clients.
Our approach is inspired by Zhao et al., which distributes a warm-up model trained with globally shared data and shows test accuracy can be increased~\cite{zhao2018federated}.

Specifically, we assume the server knows the total number of nodes ($n$) from all clients, \revisednew{as justified in Section~\ref{subsec:threat}}. When bootstrapping FL, the parameter server generates a reference graph $\mathcal{G}^{ref}$ and distributes it to all clients for them to compute $r^k$ ($\forall k \in [1, K]$). We choose to apply \textit{Barabási–Albert (BA) Model}~\cite{barabasi1999emergence} to generate the reference graph. 
BA model is a \textit{random graph model} for complex networks analysis~\cite{drobyshevskiy2019random}, and it is selected

because 1) it only needs the number of nodes ($n$) and the number of initial edges for a new node ($m$) and 2) it has low computational overhead (complexity is $O(n \times m)$) when $m$ is small. 
In Section~\ref{subsec:settings} and Appendix~\ref{app:ablation}, we discuss the selection of $m$ and evaluate its impact. The pseudo-code of BA model is written in Appendix~\ref{app:ba}.

\noindent\textbf{\novel{}Graph sketching.}
On a client $C^k$, $r^k$ will be initialized based on the \textit{graph similarity} between its generated graph $\mathcal{G}^k$ (aggregated from $[\mathcal{G}_1^k, \dots, \mathcal{G}_T^k]$) and $\mathcal{G}^{ref}$. Intuitively, the client with a similar distribution to the global data should receive high $r^k$. Noticeably, we compute $r^k$ locally on the client, so nothing about the client's data or distribution will be learned by the server. This is different from the standard FedAVG, which computes weights on the server. However, graph similarity is computationally intensive: e.g., graph edit distance (GED) is a fundamental NP-hard problem~\cite{bunke1997relation}. Therefore, we choose to compute the similarity on the \textit{graph sketches} instead of raw graphs for efficiency. 
We use \textit{Weisfeiler-Lehman (WL) graph kernel}~\cite{shervashidze2011weisfeiler} to capture the graph structure surrounding each node and compute \textit{histogram} by nodes' neighborhood.

\begin{algorithm}[h]
\caption{Weisfeiler-Lehman Histogram (WLH). %
}
\label{alg:wlh}
\KwData{Graph \( \mathcal{G} \)}
\KwResult{histograms }
\BlankLine
$\text{labels} \gets \text{InitializeLabels}(\mathcal{G})$\;
$\text{histograms} \gets \text{Histogram}(\text{labels})$\;
\For{\( i = 1 \) \KwTo MaxIters}{
    \ForEach{\( v \) in \( \mathcal{G} \)}{
        $\text{neigh} \gets \text{NeighborLabels}(v, \text{labels})$\;
        $\text{labels}[v] \gets \text{Hash}(\text{Sort}(\text{labels}[v] + \text{neigh}))$\;
    }
    \( \text{histograms} \gets \text{Append}(\text{histograms}, \text{Histogram}(\text{labels})) \)\\
}
\Return \( \text{histograms} \)
\end{algorithm}
In Algorithm~\ref{alg:wlh}, we describe how to compute the Weisfeiler-Lehman histogram (WLH), where $\text{InitializeLabels}$ assigns \textit{node degree} as a label to each node $v$, $\text{NeighborLabels}$ creates a multiset containing $v$'s current label and its neighbors' labels, $\text{Hash}$ and $\text{Sort}$ produce a new label for $v$.
$MaxIters$ is the number of iterations, and $i$th-iteration computes WLH for $i$-hop neighborhood. We set $MaxIters$ to 3.
When applying Algorithm~\ref{alg:wlh} to our setting, we compute the Jaccard similarity $\jsequ^k$ between $\mathcal{G}^k$ and $\mathcal{G}^{ref}$. 
\begin{equation}
\label{equ:rk}
   \jsequ^k =  \frac{\mathcal{G}^{ref}\cap \mathcal{G}^k}{\mathcal{G}^{ref}\cup \mathcal{G}^k}
\end{equation}

\noindent\textbf{\novel{}Adaptive contribution scaling (ACS).}
We follow the FedAVG client-server protocol to update the model parameters by iterations. While FedAvg uses the same $r^k$ throughout FL, we found the client contributions can be more precisely modeled by dynamically adjusting them towards stability.
This idea at the high level has been examined by prior works like ~\cite{siomos2023contribution, wu2021fast, cao2021fltrust}, but we found none of them are suitable under our setting. First, they require the statistical information of local data (e.g., local gradients or local samples~\cite{wu2021fast,cao2021fltrust}), which has been avoided by \system due to privacy concerns. Second, the contribution evaluation can be heavy (e.g., Shapley value by ~\cite{siomos2023contribution}).

Instead, we propose a \textit{lightweight and privacy-preserving} method termed \textit{adaptive contribution scaling (ACS)} to adjust clients' weights only using the model parameters.
Specifically, for client $C^k$ at the FL iteration $i$, the parameter server computes a similarity metric
$S^k_i$ based on the cosine similarity, and a distance metric based on L2 distance $D^k_i$. The new client' weights $r^k_i$ will be computed with $\jsequ^k$, $S^k_i$ and $D^k_i$.
$r^k_1$ is initialized using $\jsequ^k$. ACS models both representational similarity (with $S^k_i$) and distance of local models to the global models (with $D^k_i$), hence presenting a more reliable indicator for client contributions.
To mitigate model collapse incurred by some clients, the server bounds $D^k_i$ by a hyperparameter $\omega$.

The definitions of $S^k_i$ and $D^k_i$, and the new aggregation function are listed below. 
\begin{equation}\label{eq_sim}
    S^k_i = \frac{||\sum w^k_i \times w_{i-1}||}{||\sum w_i^k||\times ||w_{i-1}||}
\end{equation}
\begin{equation}\label{eq_dis}
    D^k_i = \frac{\omega\sqrt{\sum(w_i^k-w_{i-1})^2}}{\max(\omega, \sqrt{\sum(w_i^k-w_{i-1})^2})},\forall\omega>0
\end{equation}

\begin{equation}
\label{equ:w_update}
    w_{i+1} = \frac{1}{K}\sum^{K}_{k=1} r^k_{i} \times w^k_i \quad \mathrm{s.t.} \ r^k_{i} =  c_1 \times \jsequ^k + c_2 \times S^k_i \times D^k_i
\end{equation} 
where $c_1$ and $c_2$ are two constants to adjust the contributions from $\jsequ^k$, $S^k_i$ and $D^k_i$.

With ACS, \system is able to achieve better effectiveness in the evaluation. Moreover, we are able to \textit{formally prove} that the iteration-wise difference shifting under Equation~\ref{equ:w_update} is \textit{bounded}, by $|\frac{w_{i+1}}{\sum_k w^k_i}|\leq c_1+c_2\omega $ from $\sum_k w^k_i$ (FedAVG). 
The non-IID data can lead to slow convergence or even non-convergence in training~\cite{li2019convergence}. By bounding the iteration-wise difference shifting through  ACS (i.e., adjusting the e contributions from $\jsequ^k$, $S^k_i$ and $D^k_i$), we are able to address this issue with a formal guarantee.

\begin{theorem}
\label{the:diff_from_fedavg}
    Define $\jsequ^k, S^k_i$, and $D_i^k$ as Equation~\ref{equ:rk}, Equation~\ref{eq_sim}, and Equation~\ref{eq_dis}, respectively.
    Let $c_1,c_2$ be two hyperparameters to adjust the contributions.
    Global model update shifting from FedAVG per iteration $i$ is bounded by $|\frac{w_{i+1}}{\sum_k w^k_i}|\leq c_1+c_2\omega $ for any $0\leq \jsequ^k,S^k_i\leq1$.
\end{theorem}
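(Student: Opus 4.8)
The plan is to reduce the entire statement to a uniform per-client bound $r^k_i \le c_1 + c_2\omega$ and then propagate that bound through the aggregation rule of Equation~\ref{equ:w_update}. The three ingredients of $r^k_i$ can be controlled separately: $\jsequ^k \le 1$ and $S^k_i \le 1$ hold directly from the hypothesis $0 \le \jsequ^k, S^k_i \le 1$ (Jaccard and cosine similarities are taken to lie in $[0,1]$), so the only substantive work is to show that the clipped distance term satisfies $0 \le D^k_i \le \omega$.

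For the distance term I would abbreviate $d^k_i := \sqrt{\sum (w^k_i - w_{i-1})^2}$ for the L2 deviation of client $k$'s local model from the previous global model, so that Equation~\ref{eq_dis} reads $D^k_i = \omega d^k_i / \max(\omega, d^k_i)$. A two-case split on the $\max$ then finishes it: if $d^k_i \le \omega$ the denominator is $\omega$ and $D^k_i = d^k_i \le \omega$; if $d^k_i > \omega$ the denominator is $d^k_i$ and $D^k_i = \omega$. Hence $D^k_i = \min(d^k_i, \omega) \le \omega$, which is precisely the dynamic-clipping behavior promised in the design section. Combining with $S^k_i \le 1$ gives $c_2 S^k_i D^k_i \le c_2\omega$, and with $c_1 \jsequ^k \le c_1$ yields the uniform bound $r^k_i \le c_1 + c_2\omega =: R$ for every client $k$ and iteration $i$.

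With $0 \le r^k_i \le R$ in hand, the final step feeds this into $w_{i+1} = \frac{1}{K}\sum_k r^k_i w^k_i$ and compares against the plain FedAVG aggregate $\sum_k w^k_i$. Factoring the common scalar bound $R$ out of each summand and using linearity of the sum lets the $\sum_k w^k_i$ cancel against the denominator, leaving $\left|\frac{w_{i+1}}{\sum_k w^k_i}\right|$ controlled by $R = c_1 + c_2\omega$; the $\frac{1}{K}$ normalization only slackens the resulting bound, so it cannot break the inequality.

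The step I expect to be the genuine obstacle is not any single inequality — each is a one-liner once $D^k_i \le \omega$ is established — but making the object $\left|\frac{w_{i+1}}{\sum_k w^k_i}\right|$ precise, since $w_{i+1}$ and $\sum_k w^k_i$ are parameter tensors rather than scalars and their quotient is not literally well defined. I would pin this down by reading the bound coordinate-wise, or equivalently by treating each $r^k_i$ as a scalar factor applied uniformly to client $k$'s contribution, so that the per-client inequality $r^k_i \le R$ transfers cleanly to the ratio of aggregated updates. Fixing this interpretation carefully is what turns the intuitive ``each contribution is scaled by at most $c_1 + c_2\omega$'' into the stated bound.
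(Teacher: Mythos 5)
Your proposal is correct and follows essentially the same route as the paper's own proof: the same two-case analysis of the $\max$ in Equation~\ref{eq_dis} to get $D^k_i\leq\omega$, the same use of $0\leq \jsequ^k, S^k_i\leq 1$ to bound $r^k_i\leq c_1+c_2\omega$, and the same triangle-inequality propagation through the aggregation in Equation~\ref{equ:w_update} (the paper likewise ends with the $1/K$ absorbed into the denominator). Your closing caveat about the tensor quotient needing a coordinate-wise reading is a fair point that the paper itself glosses over, but it does not change the argument.
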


In Appendix~\ref{app:diff_from_fedavg}, we show the full proof of Theorem~\ref{the:diff_from_fedavg}. This analytical result shows promises that \system can be effective in other applications that use subgraph FL.

\noindent\textbf{\novel{}Robustness via norm bounding.}
Due to the use of FL, a client deploying GNIDS could be more vulnerable, when the other clients are compromised and poisoning the global model. We observe that ACS can be extended to defend against FL poisoning attack by limiting the contribution of a client, and we are motivated to integrate 
norm bounding (NB)~\cite{sun2019can} for this purpose. In particular, NB observes that the attacker's model updates are likely to have large norms to influence the direction of the global model. 
As a countermeasure, the server can bound the model updates with a threshold $M$ to mitigate the impact of the abnormal update during aggregation. 
Our approach differs from the standard NB by adjusting the bounds \textit{dynamically} under ACS.

We define $\mathsf{NB}(\Delta w^{k}_{i+1})=\frac{\Delta w^{k}_{i+1}}{\max(1, \lVert \Delta w^{k}_{i+1} \rVert_2 / M)}$~\cite{sun2019can} for simplification.  The global model is bounded through the updated difference $\Delta w^{k}_{i+1}$ derived from local models as Equation~\ref{eq_update_bounding}. 
\begin{equation}
\label{eq_update_bounding}
    w_{i+1}=w_i+\frac{1}{K}\sum^{K}_{k=1}r_i^k\times  \mathsf{NB}(\Delta w^{k}_{i+1})\quad s.t. \ \Delta w^{k}_{i+1}=w^{k}_{i+1}-w_i
\end{equation}

Like ACS, we present Theorem~\ref{the:convergence} (formally proved in Appendix~\ref{app:nb_proof}) to confirm that \system has a bounded convergence rate for establishing a global model. We want to highlight that NB could lead to slow convergence and worse effectiveness~\cite{icml/ZhangCH0Y22} in the sacrifice of robustness, but such an issue is well resolved when combining ACS and NB, achieving \textit{dynamic clipping}. In Appendix~\ref{app:dpfl}, we further extend Equation~\ref{eq_update_bounding} to support differential privacy (DP), when the privacy of model output is a concern. Overall, our method entails a strong guarantee that effectiveness and robustness can be preserved concurrently on non-IID graph data.

\begin{theorem}
\label{the:convergence}
Assume all Lemmas~\ref{ass:lip_bound},\ref{lem:bvar},\ref{lem:bvarg},\ref{lem:bgra} and constrains~\cite{iclr/ReddiCZGRKKM21} reviewed in Appendix~\ref{app:proof}.
  For any bounding norm  $M\geq \eta EG$, theoretical complexity of \system's convergence  is $O(1/{(R\eta E(c_1+c_2\omega))})+O((c_1+c_2\omega)\eta/K\cdot\sigma^2_{\local})+O(\eta^2E^2\cdot\sigma^2_{\globalsf})$.
\end{theorem}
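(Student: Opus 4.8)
The plan is to reduce \system's combined ACS-plus-norm-bounding update (Equation~\ref{eq_update_bounding}) to the generic federated-optimization template of Reddi et al.~\cite{iclr/ReddiCZGRKKM21} and then run their nonconvex $L$-smooth descent analysis, with the uniform ACS weight bound $c_1+c_2\omega$ from Theorem~\ref{the:diff_from_fedavg} playing the role of the server learning rate and $\eta$ playing the role of the client learning rate. The first and most load-bearing step is to neutralize the norm-bounding operator. Under the gradient bound $G$ (Lemma~\ref{lem:bgra}), an honest client that runs $E$ local SGD steps of size $\eta$ produces an update of magnitude $\lVert \Delta w^k_{i+1}\rVert_2 \le \eta E G$; hence the hypothesis $M \ge \eta E G$ forces $\max(1,\lVert \Delta w^k_{i+1}\rVert_2/M)=1$, so $\mathsf{NB}(\Delta w^k_{i+1})=\Delta w^k_{i+1}$ and Equation~\ref{eq_update_bounding} collapses to the pure ACS-weighted aggregation $w_{i+1}=w_i+\frac{1}{K}\sum_k r^k_i\,\Delta w^k_{i+1}$. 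This is precisely why the bound is stated ``for any bounding norm $M\ge \eta E G$'': in that regime clipping never activates on legitimate updates, and convergence reduces to the unclipped case.

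Next I would define the global objective $f(w)=\frac{1}{K}\sum_k f_k(w)$, identify the server pseudo-gradient $\Delta_i=\frac{1}{K}\sum_k r^k_i\,\Delta w^k_{i+1}$, and treat the effective per-round step size as lying in $[0,(c_1+c_2\omega)]\cdot\eta E$, so that $(c_1+c_2\omega)$ absorbs the server step and $\eta$ absorbs the client step. I then apply $L$-smoothness (Lemma~\ref{ass:lip_bound}) to obtain $f(w_{i+1})\le f(w_i)+\langle\nabla f(w_i),w_{i+1}-w_i\rangle+\frac{L}{2}\lVert w_{i+1}-w_i\rVert^2$, take conditional expectation, and split the inner product into a descent term proportional to $-\lVert\nabla f(w_i)\rVert^2$ plus drift and variance residuals. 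Three standard sub-bounds then feed the three target terms: (i) telescoping $f(w_i)-f^*$ over $i=0,\dots,R-1$ and dividing by the effective step $R\eta E(c_1+c_2\omega)$ yields the optimization term $O(1/(R\eta E(c_1+c_2\omega)))$; (ii) bounding the variance of the $K$-client-averaged stochastic pseudo-gradient via Lemma~\ref{lem:bvar} produces the $1/K$ reduction, giving $O((c_1+c_2\omega)\eta/K\cdot\sigma^2_{\local})$; and (iii) bounding the accumulated client drift $\mathbb{E}\lVert w^k_{i,e}-w_i\rVert^2$ across the $E$ local steps using the heterogeneity bound (Lemma~\ref{lem:bvarg}) produces $O(\eta^2E^2\cdot\sigma^2_{\globalsf})$.

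The main obstacle is that the ACS weights $r^k_i$ are data- and model-dependent and vary across rounds, unlike the fixed averaging weights assumed in the vanilla template of~\cite{iclr/ReddiCZGRKKM21}: the aggregated pseudo-gradient is a nonuniform combination of client updates whose expectation $\frac{1}{K}\sum_k r^k_i\,\nabla f_k(w_i)$ is not exactly a scalar multiple of $\nabla f(w_i)$, so the descent direction is perturbed. I would control this by invoking the uniform bound $r^k_i\le c_1+c_2\omega$ from Theorem~\ref{the:diff_from_fedavg} to cap the effective step size and to keep the smoothness term $\frac{L}{2}\lVert w_{i+1}-w_i\rVert^2$ inside a $(c_1+c_2\omega)^2$ envelope, while folding the reweighting discrepancy into the heterogeneity residual and showing the weighted combination still yields net descent on $f$. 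Pinning down this alignment argument, and absorbing the $c_1+c_2\omega$ factors consistently across all three terms, is the delicate part; once it is in place, the telescoping and rearrangement that produce the stated $O(\cdot)$ expression are routine.
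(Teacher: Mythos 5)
Your proposal is correct in outline but follows a genuinely different route from the paper's own proof. You spend the hypothesis $M \geq \eta E G$ up front to argue that clipping never activates, collapse Equation~\ref{eq_update_bounding} to an unclipped ACS-weighted aggregation, and then run the standard nonconvex smooth analysis of~\cite{iclr/ReddiCZGRKKM21} with the weights capped by $c_1+c_2\omega$. The paper never makes this reduction: it keeps the clipping operator live throughout, defining the per-client effective scaling $\alpha^k_i = (c_1 \jsequ^k + c_2 S^k_i D^k_i)M/\max\bigl(M, \eta\|\sum_{e=1}^{E}g^{k,e}_{i}\|\bigr)$, its expectation version $\tilde{\alpha}^k_i$, and the client average $\overline{\alpha}_i$, then decomposes the first-order term into clipping-bias terms (involving $\alpha^k_i-\tilde{\alpha}^k_i$), weight-heterogeneity terms (involving $\tilde{\alpha}^k_i-\overline{\alpha}_i$), and a main descent term carrying the common factor $\overline{\alpha}_i$; these are bounded by importing Theorem~3.1 of~\cite{icml/ZhangCH0Y22} (clipping-enabled FL) together with the lemmas of~\cite{iclr/ReddiCZGRKKM21}, and the resulting guarantee is on the $\overline{\alpha}$-weighted gradient norm. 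Your shortcut is legitimate --- under the stated hypothesis the paper's $\alpha^k_i$ reduces to $r^k_i$ and the clipping-bias terms vanish --- and it buys a more elementary argument that avoids the clipped-FL machinery of~\cite{icml/ZhangCH0Y22}; what the paper's route buys is an analysis template that survives when clipping \emph{does} activate, which is exactly the adversarial regime norm bounding exists for. Two details you should repair. First, Lemma~\ref{lem:bgra} bounds gradients coordinate-wise, so the honest update satisfies only $\|\Delta w^k_{i+1}\|_2 \leq \eta E \sqrt{d}\, G$; your claim that $M \geq \eta E G$ deactivates clipping needs either a norm interpretation of $G$ or an extra $\sqrt{d}$ (the paper shares this looseness, but your proof leans on it as the load-bearing step). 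Second, for the alignment issue you correctly flag (data-dependent $r^k_i$ making $\frac{1}{K}\sum_k r^k_i \nabla F_k(w_i)$ misaligned with $\nabla f(w_i)$), a uniform cap $r^k_i \leq c_1+c_2\omega$ alone does not make the weighted combination a descent direction; the concrete device is the paper's centering trick --- run the descent argument with the common scalar $\overline{\alpha}_i$ and push the per-client deviations $\tilde{\alpha}^k_i - \overline{\alpha}_i$ into bias residuals --- and your writeup should adopt it (or something equivalent) rather than leave this as a noted difficulty.
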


\subsection{Intrusion Detection}
\label{subsec:anomaly}

After the model is trained, for intrusion detection by GNIDS, different classification granularities can be applied. Edge classification compares the edge scores generated from the GNIDS decoder with a threshold, which can be learnt from validation snapshots, and achieves the finest detection granularity~\cite{king2023euler,xu2024understanding,khoury2024jbeil}. Node classification adds a classification layer (e.g., softmax) on top of the node embeddings and compares the probabilities with a threshold~\cite{khoury2024jbeil}. Alternatively, one can compute a score for the whole snapshot by aggregating the edge scores and detect the abnormal ones. In this work, we test \system over edge-level classification due to its finest detection granularity. Supporting other detection granularities is trivial by applying the aforementioned changes.

In Appendix~\ref{app:workflow}, we also summarize the detailed workflow of \system in pseudo-code.

\section{Evaluation}
\label{sec:evaluation}

This section describes our experiment setting, including the evaluated GNIDS, datasets, baseline FL methods, etc. Then, we consider the effectiveness of \system and other baseline FL methods on different combinations of GNIDS and datasets. Next, we show that \system fulfills the scalability and robustness goals. An ablation study is performed in the end to understand the impact of individual components and hyper-parameters.

\subsection{Experiment Settings}
\label{subsec:settings}

\zztitle{Evaluated GNIDS.}
We adapt two GNIDS models Euler~\cite{king2023euler} and Jbeil~\cite{khoury2024jbeil} under \system. 
We chose them because they have open-source implementations and both have been tested on large-scale network datasets. In addition, their architectures are very different, so we can assess whether the design goals can be achieved across different GNIDS modes.
For instance, due to the use of GCN, Euler only supports transductive learning, while TGN used by Jbeil supports both transductive learning and inductive learning. Both learning modes are tested by us.

\zztitle{Datasets.} We use OpTC, LANL cyber1 (or LANL for short) and Pivoting datasets to evaluate \system. These datasets have been used by our baseline GNIDS~\cite{king2023euler,khoury2024jbeil} and thoroughly tested by other GNIDS like~\cite{xu2024understanding,paudel2022pikachu}. 
In Appendix~\ref{app:datasets}, we describe the characteristics of the datasets and how we pre-process them.
Table~\ref{tab:dataset_all} shows their statistics. 

\revisednew{All datasets are highly non-IID as reflected in Table~\ref{tab:kl_client} of Appendix. For LANL and OPTC, we observe high standard deviation on node number in different clients and in Pivoting, the event numbers of clients have high standard deviation. Such dataset characteristic justifies the design of \system.
}

\begin{table}[t]
    \centering
    \caption{The statistics of the three tested datasets. %
    }    
    \begin{tabular}{l|cccc}
        \hline
        Dataset & \#Nodes & \#Events & Duration  \\
        \hline
        OpTC & 814 & 92,073,717  & 8 days &  \\ 
        OpTC-redteam & 28 & 21,784  & 3 days \\
        \hline
        LANL & 17,649 & 1,051,430,459 & 58 days &  \\ 
        LANL-redteam & 305  & 749  & 28 days \\
        \hline
        Pivoting &  1,015 & 74,551,643 & 1 day\\ 
        \hline
        
    \end{tabular}
    \label{tab:dataset_all}
\end{table}

\zztitle{Data split for clients.} 
To simulate the FL process, we need each client to keep a subgraph of the complete graph. The prior FGL studies choose to cluster the nodes and generate local graphs~\cite{yao2022fedgcn, zhang2021subgraph, xie2021federated}. We follow this direction and leverage a recent approach Multilayer Block Model (MBM)~\cite{larroche2022multilayer} to cluster the logs. MBM clusters system events to visualize the major clusters and major events between different clusters. 
For LANL, MBM has built-in support and we use its code to generate a user-machine bipartite graph and change its ``Number of bottom clusters''  to get different numbers of sub-graphs~\cite{mbm}. 
For OpTC, 
we categorize its hosts into internal and external nodes, like how MBM processes the VAST dataset (a dataset with simulated network traffic), and generate sub-graphs like processing LANL.
For Pivoting, we follow similar settings as LANL except that we replace machines with hosts, and use the protocol and destination port to fill the node type.

\zztitle{Metrics.} 
For Euler, we define true positive (TP) as a malicious edge in a snapshot that contains at least one redteam event, and true negative (TN) as a normal edge without any redteam event. False positive (FP) and false negative (FN) are misclassified as malicious and normal edges. 
For Jbeil, we consider TP as a non-existent edge and TN as an existing edge while FPs and FNs are misclassified as non-existent and existing edges
With TP, TN, FP and FN defined, we compute precision, recall and FPR as $\frac{TP}{TP+FP}$, $\frac{TP}{TP+FN}$ and $\frac{FP}{TP+FP}$.

The values of precision, recall and FPR depend on the classification threshold, which might not always be optimal, e.g., when the validation dataset has a different distribution from the testing dataset. Hence, we also compute the area under the ROC curve (AUC) which is computed over all thresholds. When the malicious and normal edges are highly imbalanced, AUC might not correctly capture the effectiveness of a GNIDS, as it measures the relation between TPR and FPR (see ``Base Rate Fallacy'' in ~\cite{quiring2022and}). So, we also compute average precision (AP), which is defined as:
\begin{equation}\label{eq:ap}
\begin{aligned}
\text{AP} = \sum_n  (R_n - R_{n-1}) \times P_n
\end{aligned}
\end{equation}
where $R_n$ and $P_n$ are the precision and recall at the $n$-th threshold. It measures the area under the precision-recall curve, which ``conveys the true performance''~\cite{quiring2022and} on an imbalanced dataset. We use AUC and AP as the major metrics in the percentage format following \cite{ren2015faster,paisitkriangkrai2015pedestrian}.

\zztitle{Baselines.} 
We consider 5 baseline models to compare with \system. Except for Non-FL, all the other models are well-known FL models or extensions. For them, the cross-client edges have been added to the local graphs for a fair comparison with \system. 

\begin{itemize}[noitemsep,nolistsep]
    \item \textbf{Non-FL.} For this method, we assume GNIDS is directly trained on the whole training dataset and FL is not involved. 
    
    \item \textbf{FedAVG~\cite{mcmahan2017communication}.} We use the basic version and use the same weight for each client.
    
    \item \textbf{FedOpt~\cite{fedopt}.}
    FedOpt seeks to improve the convergence and stability of FL in non-IID settings. It employs adaptive local solvers and a server-side momentum term to achieve faster convergence.
    
    \item \textbf{FedProx~\cite{fedprox}.}
    Similar to FedOpt, FedProx is designed to handle non-IID settings. It introduces a proximal term to the local optimization objective of each client, which helps prevent divergence when local datasets are skewed.

    \item \textbf{\systemNumber.} This is a simple extension of FedAVG, in which a client's weight depends on its node numbers $n^k$, i.e., $r^k = \frac{n^k}{n}$, where $n$ is the total node number. $r^k$ is constant across iterations.

\end{itemize}

To evaluate the impact of norm bounding on GNIDS utility, we create a variation of \system, termed \system-UB, that performs without Equation~\ref{equ:w_update}.

\zztitle{Hyper-parameters.} We set $m=5$ for Barabási–Albert (BA) model. 
For the client numbers $K$, we vary it from 2-5 for LANL, 2-5 OpTC, and 2-4 for Pivoting. 
We found if we further increase $K$, MBM will yield invalid clusters (e.g., empty clusters). For Pivoting, because the edge features useful for clustering are fewer (only the port number can be used), generating clusters with sufficient node numbers from $K \geq 5$ is infeasible. Yet, for the scalability analysis in Section~\ref{subsec:scalability}, we tried $K=10,20$ for LANL by dropping invalid clusters.
The number of local epochs $E$ is set to 1, except for FedOpt, for which we set $E$ to 5. For FedProx, we set its parameter $\mu$, which controls the weight of the proximal term, to 0.05.
For the learning rate $\eta$, we set 0.01 for LANL, 0.005 for OpTC and 0.003 for Pivoting. We use Adam Optimizer.
For ACS hyper-parameters, we set a large $c_1$ as 0.8 and a small $c_2$ as 0.2 to avoid drastic changes on $w_{i+1}$. $\omega$ is set as 5.
For norm bounding parameter $M$, we use  5 for Euler and 70 for Jbeil. We use a larger value of $M$ for Jbeil because the norm $w$ of Jbeil is empirically large.

\zztitle{Environment.} 
We run the experiments on a workstation that has an AMD Ryzen Threadripper 3970X 32-core Processor and 256 GB CPU memory. The OS is Ubuntu 20.04.2 LTS. 
We use PyTorch 1.10 and Python 3.9.12 as the environment when building \system. We use GPU implementations as default and our GPU is NVIDIA GeForce RTX 3090 with 24GB memory.
For Euler, we use its source code from~\cite{euler-github} and disable its distributed setting so FL can be implemented. 
For Jbeil, we use its source code from~\cite{jbeil-github}.

\subsection{Effectiveness}
\label{subsec:effectiveness}
We evaluate the overall performance of \system under the default hyper-parameters
and compare the results with the other baselines. We list the results when $K=3-5$ and the results under $K=2$ are shown in the Appendix~\ref{app:lanloptc}.
We assume no poisoning attacks happen here and leave the evaluation under poisoning attacks in Section~\ref{subsec:poison}.

\setlength{\tabcolsep}{4pt}
\begin{table*}[h]
    \scriptsize
    \centering
    \caption{Evaluation on Euler and Jbeil under different client numbers $K$. Non-FL is listed for ease of comparison. All numbers are used in the percent format. The \colorbox{colorbest}{\color{colortext}{best}} and the \colorbox{colorsecond}{\color{colortext}second-best} among FL methods are highlighted respectively. \greenup means outperforming non-FL and \reddown means the worst performance.
    }
    \begin{tabular}{l|cc|cc|cc|cc|cc|cc|cc|cc}
    \hline
    \multicolumn{9}{c|}{\textbf{{\small OPTC-Euler}}} & \multicolumn{8}{c}{\textbf{{\small LANL-Euler}}} \\
    {\footnotesize Client\#} & \multicolumn{2}{c}{\textbf{2}} & \multicolumn{2}{c}{\textbf{3}} & \multicolumn{2}{c}{\textbf{4}}& \multicolumn{2}{c|}{\textbf{5}} & \multicolumn{2}{c}{\textbf{2}} & \multicolumn{2}{c}{\textbf{3}}& \multicolumn{2}{c}{\textbf{4}}& \multicolumn{2}{c}{\textbf{5}} \\
    \hline
    {\footnotesize Algorithm} & AP & AUC & AP & AUC & AP & AUC  & AP & AUC & AP & AUC & AP & AUC & AP & AUC  & AP & AUC \\
    \hline
    {\footnotesize Non-FL}      & 69.96$\tbspace$ & 98.76$\tbspace$ & 69.96$\tbspace$ & 98.76$\tbspace$ & 69.96$\tbspace$ & 98.76$\tbspace$ & 69.96$\tbspace$ & 98.76$\tbspace$  & 0.89$\tbspace$ & 97.93$\tbspace$ & 0.89$\tbspace$ & 97.93$\tbspace$ & 0.89$\tbspace$ & 97.93$\tbspace$ & 0.89$\tbspace$ & 97.93$\tbspace$ \\
    \hline
    {\footnotesize FedAVG} & 33.69\reddown & 95.78\reddown & 65.72\reddown & 98.62\reddown & 66.41\reddown & 97.00\reddown & 75.19\greenup & 98.66$\tbspace$ & 0.71$\tbspace$ & 98.08\greenup & 0.10$\tbspace$ & \cellcolor{colorsecond}\color{colortext}98.08\greenup & 0.22$\tbspace$ & 98.41\greenup & 0.57$\tbspace$ & \cellcolor{colorbest}\color{colortext}{98.79}\greenup \\
    {\footnotesize FedOpt}       & 66.62$\tbspace$ & 98.37$\tbspace$ & 67.24$\tbspace$ & 99.19\greenup & 70.09\greenup & 98.26$\tbspace$ & 72.19\greenup & 98.84\greenup& 0.65$\tbspace$ & 97.87$\tbspace$ & 0.60$\tbspace$ & \cellcolor{colorbest}\color{colortext}{98.80}\greenup & 0.60$\tbspace$ & \cellcolor{colorbest}\color{colortext}{99.08}\greenup & 0.01\reddown & 84.45$\tbspace$ \\
    {\footnotesize FedProx}  & 64.29$\tbspace$ & 98.07$\tbspace$ & 72.46\greenup & 99.25\greenup & 73.79\greenup & 97.30$\tbspace$ & 71.38\greenup & 98.25\reddown & \cellcolor{colorsecond}\color{colortext}0.72$\tbspace$ & 97.90$\tbspace$ & 0.01\reddown & 81.98\reddown & 0.19\reddown & \cellcolor{colorsecond}\color{colortext}98.84\greenup & 0.01\reddown & 81.64\reddown \\
    {\footnotesize \systemNumber} & 70.41\greenup & 99.13\greenup & 69.91$\tbspace$ & 99.29\greenup & 78.40\greenup & 98.97\greenup & 71.87\greenup & 98.78\greenup & 0.51\reddown & 96.85\reddown & 0.62$\tbspace$ & 97.97\greenup & 0.41$\tbspace$ & 96.80\reddown & 0.53$\tbspace$ & 96.68$\tbspace$ \\
    {\footnotesize \system-UB}   & \cellcolor{colorbest}\color{colortext}{77.14}\greenup & \cellcolor{colorbest}\color{colortext}{99.50}\greenup & \cellcolor{colorsecond}\color{colortext}82.95\greenup & \cellcolor{colorsecond}\color{colortext}99.68\greenup & \cellcolor{colorbest}\color{colortext}{84.96}\greenup & \cellcolor{colorsecond}\color{colortext}99.69\greenup & \cellcolor{colorsecond}\color{colortext}78.12\greenup & \cellcolor{colorsecond}\color{colortext}99.24\greenup  & 0.70$\tbspace$ & 97.07$\tbspace$ & \cellcolor{colorsecond}\color{colortext}0.65$\tbspace$ & 97.95\greenup & \cellcolor{colorbest}\color{colortext}{0.82}$\tbspace$ & 97.01$\tbspace$ & \cellcolor{colorsecond}\color{colortext}0.65$\tbspace$ & \cellcolor{colorsecond}\color{colortext}97.33$\tbspace$ \\
    {\footnotesize \system}     & \cellcolor{colorsecond}\color{colortext}74.60\greenup & \cellcolor{colorsecond}\color{colortext}99.41\greenup & \cellcolor{colorbest}\color{colortext}{83.29}\greenup & \cellcolor{colorbest}\color{colortext}{99.76}\greenup & \cellcolor{colorsecond}\color{colortext}80.82\greenup & \cellcolor{colorbest}\color{colortext}{99.79}\greenup & \cellcolor{colorbest}\color{colortext}{82.03}\greenup & \cellcolor{colorbest}\color{colortext}{99.79}\greenup & \cellcolor{colorbest}\color{colortext}{0.77}$\tbspace$ & \cellcolor{colorbest}\color{colortext}{98.92}\greenup & \cellcolor{colorbest}\color{colortext}{0.87}$\tbspace$ & 97.68$\tbspace$ & \cellcolor{colorsecond}\color{colortext}0.72$\tbspace$ & 97.00$\tbspace$ & \cellcolor{colorbest}\color{colortext}{0.67}$\tbspace$ & 97.16$\tbspace$ \\    
    \hline
    \multicolumn{9}{c|}{\textbf{{\small LANL-Jbeil-Transductive}}} & \multicolumn{8}{c}{\textbf{{\small LANL-Jbeil-Inductive}}} \\
    {\small Client\#} & \multicolumn{2}{c}{\textbf{2}} & \multicolumn{2}{c}{\textbf{3}} & \multicolumn{2}{c}{\textbf{4}}& \multicolumn{2}{c|}{\textbf{5}} & \multicolumn{2}{c}{\textbf{2}} & \multicolumn{2}{c}{\textbf{3}}& \multicolumn{2}{c}{\textbf{4}}& \multicolumn{2}{c}{\textbf{5}} \\
    \hline
    {\footnotesize Algorithm} & AP & AUC & AP & AUC & AP & AUC  & AP & AUC & AP & AUC & AP & AUC & AP & AUC  & AP & AUC \\
    \hline
    {\footnotesize Non-FL}       & 96.47$\tbspace$ & 97.51$\tbspace$ & 96.47$\tbspace$ & 97.51$\tbspace$ & 96.47$\tbspace$ & 97.51$\tbspace$ & 96.47$\tbspace$ & 97.51$\tbspace$ & 94.47$\tbspace$ & 96.21$\tbspace$ & 94.47$\tbspace$ & 96.21$\tbspace$ & 94.47$\tbspace$ & 96.21$\tbspace$ & 94.47$\tbspace$ & 96.21$\tbspace$ \\
    \hline
    {\footnotesize FedAVG}       & 88.54$\tbspace$ & 91.85$\tbspace$ & 66.77\reddown & 72.98\reddown & 50.94\reddown & 51.84\reddown & 59.64\reddown & 65.11\reddown & 87.30$\tbspace$ & 91.34$\tbspace$ & 67.25\reddown & 73.89\reddown & 50.06\reddown & 50.12\reddown & 58.77\reddown & 63.86\reddown \\
    {\footnotesize FedOpt}       & 59.67\reddown & 64.64\reddown & 82.18$\tbspace$ & 85.81$\tbspace$ & 60.75$\tbspace$ & 67.35$\tbspace$ & 72.58$\tbspace$ & 80.06$\tbspace$  & 58.79\reddown & 63.50\reddown & 82.00$\tbspace$ & 86.12$\tbspace$ & 62.19$\tbspace$ & 69.20$\tbspace$ & 75.16$\tbspace$ & 82.33$\tbspace$ \\
    {\footnotesize FedProx}      & 61.00$\tbspace$ & 67.71$\tbspace$ & 67.29$\tbspace$ & 74.32$\tbspace$ & 70.89$\tbspace$ & 77.87$\tbspace$ & 76.52$\tbspace$ & 82.02$\tbspace$ & 62.80$\tbspace$ & 69.94$\tbspace$ & 68.28$\tbspace$ & 75.29$\tbspace$ & 70.84$\tbspace$ & 77.96$\tbspace$ & 76.52$\tbspace$ & 81.88$\tbspace$ \\
    {\footnotesize \systemNumber} & 71.37$\tbspace$ & 77.46$\tbspace$ & \cellcolor{colorbest}\color{colortext}{88.96}$\tbspace$ & \cellcolor{colorbest}\color{colortext}{91.14}$\tbspace$ & 61.92$\tbspace$ & 68.93$\tbspace$ & 63.30$\tbspace$ & 70.16$\tbspace$ & 67.33$\tbspace$ & 74.13$\tbspace$ & \cellcolor{colorbest}\color{colortext}{90.50}$\tbspace$ & \cellcolor{colorbest}\color{colortext}{92.58}$\tbspace$ & 63.85$\tbspace$ & 71.31$\tbspace$ & 62.80$\tbspace$ & 69.77$\tbspace$ \\
    {\footnotesize \system-UB}   & \cellcolor{colorbest}\color{colortext}{95.40}$\tbspace$ & \cellcolor{colorbest}\color{colortext}{96.47}$\tbspace$ & 77.47$\tbspace$ & 82.98$\tbspace$ & \cellcolor{colorbest}\color{colortext}{90.29}$\tbspace$ & \cellcolor{colorbest}\color{colortext}{91.47}$\tbspace$ & \cellcolor{colorbest}\color{colortext}{85.46}$\tbspace$ & \cellcolor{colorsecond}\color{colortext}{86.22}$\tbspace$ & \cellcolor{colorbest}\color{colortext}{95.54}\greenup & \cellcolor{colorbest}\color{colortext}{96.55}\greenup & 82.12$\tbspace$ & 86.99$\tbspace$ & \cellcolor{colorbest}\color{colortext}{91.28}$\tbspace$ & \cellcolor{colorbest}\color{colortext}{92.65}$\tbspace$ & \cellcolor{colorbest}\color{colortext}{88.09}$\tbspace$ & \cellcolor{colorbest}\color{colortext}{88.83}$\tbspace$ \\
    {\footnotesize \system}     & \cellcolor{colorsecond}\color{colortext}94.43$\tbspace$ & \cellcolor{colorsecond}\color{colortext}96.25$\tbspace$ & \cellcolor{colorsecond}\color{colortext}87.47$\tbspace$ & \cellcolor{colorsecond}\color{colortext}88.80$\tbspace$ & \cellcolor{colorsecond}\color{colortext}84.10$\tbspace$ & \cellcolor{colorsecond}\color{colortext}85.09$\tbspace$ & \cellcolor{colorsecond}\color{colortext}81.05$\tbspace$ & \cellcolor{colorbest}\color{colortext}87.70$\tbspace$  & \cellcolor{colorsecond}\color{colortext}93.58$\tbspace$ & \cellcolor{colorsecond}\color{colortext}95.88$\tbspace$ & \cellcolor{colorsecond}\color{colortext}89.13$\tbspace$ & \cellcolor{colorsecond}\color{colortext}90.49$\tbspace$ & \cellcolor{colorsecond}\color{colortext}87.80$\tbspace$ & \cellcolor{colorsecond}\color{colortext}86.36$\tbspace$ & \cellcolor{colorsecond}\color{colortext}80.56$\tbspace$ & \cellcolor{colorsecond}\color{colortext}87.48$\tbspace$ \\
    \hline
    \end{tabular}
    \label{tab:whole}
\end{table*}

\zztitle{Results on OpTC.}
In Table~\ref{tab:whole}, we compare the AUC and AP of different FL methods, with different client numbers $K$.
First, we found \system and \system-UB clearly outperform the other FL methods for \textit{every} $K$, reaching 74\%-84\% AP and over 99\% AUC. The accuracy loss due to norm bounding is also acceptable. 
Simply initializing the clients' weights with node numbers (\systemNumber) boosts the performance of FedAvg, but not yet reached the same level of \system (e.g., 69.96\% AP vs 83.29\% AP under $K=3$). 
Though FedOpt and FedProx are designed to address the non-iid issue of FedAVG, their performance is even worse than the simple variation of FedAvg (\systemNumber), suggesting pure data-adaptive tuning is difficult on imbalanced datasets.

Interestingly, we found that \system even outperformed Non-FL, which trains the GNIDS using all data.
In fact, the AP achieved by \system is 5\%-14\% higher compared to Non-FL. 
This might seem counter-intuitive, but a similar observation was also documented in a recent work ~\cite{yan2025you}, which shows that if class imbalance and data heterogeneity are well handled, FL methods can achieve better results than non-FL training on the global long-tailed data.

\begin{figure}[h]
    \centering    \includegraphics[width=0.45\textwidth]{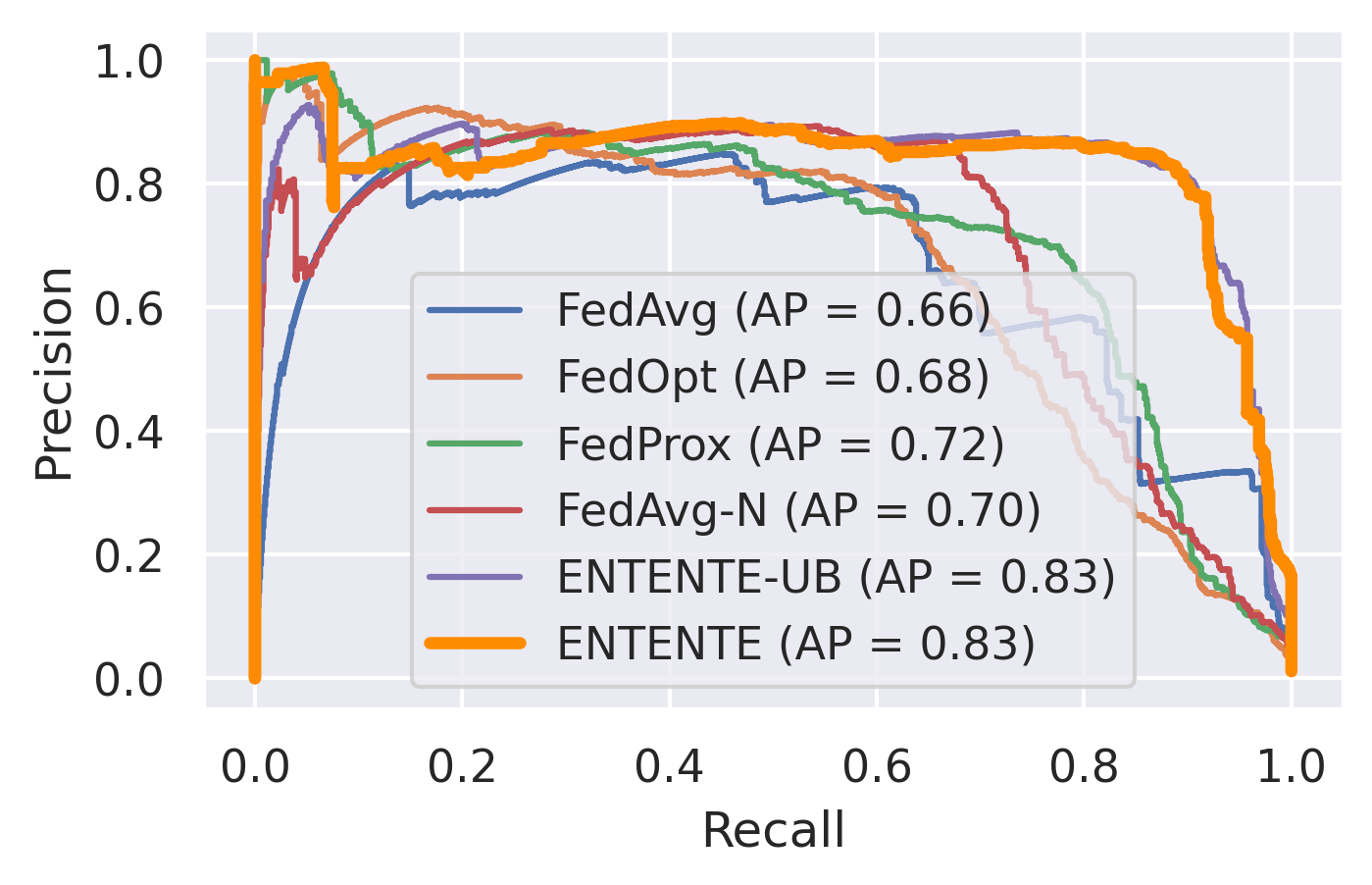}
    \caption{The precision-recall curve under different FL methods on OpTC dataset ($K=3$).
    }
    \label{fig:euler_optc}
    \vspace{-0.1in}
\end{figure}

In Figure~\ref{fig:euler_optc}, we draw the precision-recall curve of different FL methods with one client number ($K=3$) due to space limit. %
The precision of \system outperforms the other systems at most recall values, and it can reach 0.8 precision at about 0.9 recall.

\ignore{
\begin{table}[h]
    \small
    \centering
    \caption{Evaluation on LANL under different client numbers $K$ on Jbeil. Non-FL is listed for ease of comparison.}
    \begin{tabular}{lcc|cc|cc}
    \hline
    & \multicolumn{6}{c}{Transductive} \\
    Client\# & \multicolumn{2}{c}{\textbf{3}}& \multicolumn{2}{c}{\textbf{4}}& \multicolumn{2}{c}{\textbf{5}}\\
    \hline
    Algorithm & AP & AUC & AP & AUC & AP & AUC \\
    \hline
    Non-FL & 0.9647 & 0.9751 & 0.9647 & 0.9751 & 0.9647 & 0.9751 \\
    \hline
    FedAVG & 0.6677 & 0.7298 & 0.5094 & 0.5184 & 0.5964 & 0.6511\\
    FedOpt & 0.8218 & 0.8581 & 0.6075 & 0.6735 & 0.7258 & 0.8006 \\
    FedProx & 0.6729 & 0.7432 & 0.7089 & 0.7787 & 0.7652 & 0.8202 \\
    \systemNumber & \textbf{0.8896} & \textbf{0.9114} & 0.6192 & 0.6893 & 0.6330 & 0.7016 \\
    \system-UB & 0.7747 & 0.8298 & \textbf{0.9029} & \textbf{0.9147} &\textbf{0.8546} & \textbf{0.8622} \\
    \system & 0.8747 & 0.8880 & 0.8410 & 0.8509 & 0.8105 & 0.8770  \\

    \hline
     & \multicolumn{6}{c}{Inductive} \\
    \hline
    Non-FL & 0.9447 & 0.9621 & 0.9447 & 0.9621 & 0.9447 & 0.9621 \\
    \hline
    FedAVG & 0.6725 & 0.7389 & 0.5006 & 0.5012 & 0.5877 & 0.6386  \\
    FedOpt & 0.8200 & 0.8612 & 0.6219 & 0.6920 & 0.7516 & 0.8233 \\
    FedProx & 0.6828 & 0.7529 & 0.7084 & 0.7796 & 0.7652 & 0.8188 \\
    \systemNumber & \textbf{0.9050} & \textbf{0.9258} & 0.6385 & 0.7131 & 0.6280 & 0.6977  \\
    \system-UB & 0.8212 & 0.8699 & \textbf{0.9128} & \textbf{0.9265} & \textbf{0.8809} & \textbf{0.8883} \\
    \system & 0.8913 & 0.9049 & 0.8780 & 0.8636 & 0.8056 &  0.8748\\
    \hline    
    \end{tabular}
    \label{tab:comparison_jbeil_lanl}
\end{table}
}

\zztitle{Results on LANL.}
In Table~\ref{tab:whole}, we show AP and AUC when LANL is tested with Euler. Since Euler uses the redteam events as TP, the class distribution is highly imbalanced, and all systems have low AP\footnote{The paper of Euler~\cite{king2023euler} claims 5.23\% (GCN+GRU models) AP on LANL, but we cannot reproduce the same result with their GitHub code~\cite {euler-github}. A few possible explanations include we use different random seeds and different hardware (we use GPUs while Euler uses CPUs).}. 
\system and \system-UB do not outperform non-FL but are still the ones closest to non-FL in all cases ($K=2-5$) in terms of AP.
We notice that the trend of AP is somehow opposite to AUC (e.g., AP of \system is higher than FedAvg when $K=3,5$) and this observation can be attributed to the differences in the training and testing data distribution. 
In particular, Euler uses negative sampling to ``synthesize'' malicious edges during training. However, the ground-truth malicious edges during testing have a much smaller volume (only hundreds), which might not be characterized well by the sampled negative edges. Correctly classifying malicious edges is a major goal modeled under AP, but not necessarily so under AUC, when the vast majority of edges are benign.
We acknowledge that the AP is still low, and discuss this limitation in Section~\ref{sec:discussion}. Yet, we argue that the \textit{improvement over the existing systems} is more important, and \system achieves this goal.

Table~\ref{tab:whole} also shows the result when Jbeil is the GNIDS. This time, as non-existent edges are considered as TP (explained in Section~\ref{subsec:settings}), higher AP is observed. We found \system still outperforms the other baselines (except $K=3$, lower than \systemNumber). Though \system performs worse than \system-UB, the margin is small.
The more balanced class distribution makes non-FL the winner in most cases.

\zztitle{Results on Pivoting.} 
Due to space limit, we present the results in Appendix~\ref{app:lanloptc}. In short, we found \system achieves the best AP and AUC in the majority of the cases among the FL methods.

\subsection{Scalability}
\label{subsec:scalability}

Following the scalability requirement described in Section~\ref{subsec:goals}, we measure the communication overhead, latency and memory consumption caused by \system. 

We first follow the default data split setup ($K=2-5$). Regarding latency, when training LANL and OpTC with Euler, the whole training process takes an average of 116.33 
seconds and 706.40 seconds (30 FL iterations till the training converges for LANL and OpTC) and the testing process takes 160.99 seconds and 0.6139 seconds\footnote{OpTC has much larger training latency and much smaller testing latency than LANL because only 23\% of whole OpTC data is used for testing (while 96\% for LANL).}.
For Pivoting, the training process takes an average of 314.58 seconds (5 FL iterations) and the testing process takes 45.963 seconds. 

Compared to basic FedAVG, \system introduces additional overhead from generating the global reference graph and updating client weights using WLH and ACS. On LANL with $K=4$, the first step takes 5.47, 7.90, 2.70, and 0.24 seconds across the four clients, while the second step takes only 0.082 seconds. On OpTC, the first step takes 0.54, 2.25, 0.89, and 25.68 seconds, and the second step takes 0.0058 seconds. On Pivoting, the first step takes 0.049, 0.009, 0.082, and 0.035 seconds, and the second step takes 0.051 seconds.
The differences between clients are due to their different sizes. 
Overall, while \system adds some computation time, the overhead remains minor relative to the full training process.

By using FL, \system also reduces the data transmitted from client to server when training a GNIDS model, saving clients' bandwidth. Take Euler+LANL as an example. The central server would need to access the 65GB network logs to train GNIDS. With \system, only model weights $w^k_i$ are transmitted by a client $k$ in an FL iteration $i$, and only \textit{1.94MB} model weights are transmitted in total. 
\revisednew{Then, we estimate the per-iteration latency for LANL+Euler ($K=4$) considering network transmission overhead. Assuming the client upload link bandwidth is 1 MB/s following the assumption of the FedAvg paper~\cite{mcmahan2017communication} and the download link bandwidth is 10 MB/s. The server has much larger bandwidth and it will not throttle clients' network transmission. The max client computation time per iteration is 7 s. The latency per iteration will be $\frac{1.94}{1} + \frac{1.94}{10} + 7 = 9.134 s$. The latency is expected to be similar for larger $K$ if server bandwidth is much larger than client's.}

\begin{table}[h]
    \caption{\revisednew{Performance of LANL+Euler under large $K$ (from 5 to 20). ``Valid Client\#'' means the number of clients containing nodes after MBM clustering. ``Training'' is the latency of the whole training process. ``CPU'' and ``GPU'' are the averaged CPU and GPU memory overhead per client. 
    }}
    \centering
    \begin{tabular}{c|cccccc}
    \hline
        Valid Client\# & Training(s) & AP(\%) & AUC(\%) & CPU(MB) & GPU(MB) \\
        \hline
        5 & 138.51 & 0.67 & 97.16 & 1521.21 & 3343.39 \\
        9 & 224.38 & 0.77 & 98.72 & 866.13  & 3256.13\\
        20 \tablefootnote{With $K=20$, the clustering process results in only 15 valid, non-empty clients.} & 497.46 & 1.05 & 99.12 & 522.87 & 3308.80\\
        50 \tablefootnote{With $K=50$, the clustering process results in only 35 valid, non-empty clients. This experiment was executed on a CPU due to GPU memory limitations.} & 5111.43 & 0.20 & 97.99 & 1341.30 & N/A \\
        \hline
    \end{tabular}
    \label{tab:scale}
\end{table}

Finally, we attempt to increase $K$ to be more than 5 and test whether \system can scale up to many clients, and show the results in Table~\ref{tab:scale}. We evaluate this setting on the LANL dataset because it contains a large number of nodes (17,649). For OpTC and Pivoting datasets, increasing $K$ will yield invalid clusters from MBM as discussed in Section~\ref{subsec:settings}. Yet, for large $K$ (e.g., 10, 20, and 50), we found MBM generates empty clusters, which impedes generating more valid clients, so we treat the non-empty clusters as clients (9, 15, 35 for $K=\{10, 20, 50\}$).
We found that the training time does increase notably when $K=9$ and $K=15$, but this is because the data loading time is longer (more cross-client edges appear when $K$ increases). Besides, since we simulate all clients on a single machine and they share a GPU, the resource contention also elongates the training time. The training time is expected to reduce when the clients operate in a real-world distributed environment.
With $K=50$, due to GPU memory limitation, we only test on the CPU environment. 
AP and AUC are also recorded, and we find that AP and AUC get better when $K$ is set to medium values and drop again when $K$ is increased to 50.
\revisednew{
We further measure memory consumption introduced by \system for a larger number of clients to assess its per-client overhead. For each run of \system, we measure CPU and GPU overhead separately and compute the average consumption across clients. 
Specifically, the CPU memory overhead per client drops from 1521 MB to 522 MB when $K$ rises from 5 to 20, which is caused by the reduction of the graph size processed by each client. 
For $K=50$, the average CPU usage is smaller compared to $K=5$ (1521.21 MB vs 1341.30 MB). We speculate the overhead is mainly caused by the CPU–GPU transfer buffers and data loading since only the CPU is used in the setting.
GPU memory usage remains relatively stable (around 3.2–3.3 GB), suggesting that the GPU memory footprint is largely determined by the model architecture. 
}

\ignore{
\subsection{Efficiency}
\label{subsec:efficiency}

\zl{Like Argus, measure the overhead (GPU is fine). I think we should highlight our graph similarity computation based on WL kernel is quite efficient.}\jxu{How to find a slow way to calculate the graph similarity to show the improvement of WL kernel?}
\begin{table}[]
    \centering
    \caption{Training and testing time of \system{} under different client numbers for LANL.}
    \begin{tabular}{ccc}
    \hline
    Weight & Training (s) & Testing (s)\\
    \hline
        
    Node Number & \\
    \system &  \\

    \hline        
    \end{tabular}
    \label{tab:time_cost}
\end{table}
}

\subsection{Robustness against Poisoning Attacks}
\label{subsec:poison}

In this subsection, we evaluate how \system performs when FL poisoning attack is conducted by a compromised client. Recently, a few adaptive attacks against GNIDS were developed~\cite{xu2023cover, goyal2023sometimes, mukherjee2023evading}. Only Xu et al.~\cite{xu2023cover} tested the training-time attacks (the other focused on testing-time attacks).
So far, none of the prior works considered the robustness of a GNIDS model trained under FL, and we adapt the attack from Xu et al.~\cite{xu2023cover} to our setting, which has been evaluated against Euler on the LANL dataset.

In the original attack of Xu et al.~\cite{xu2023cover}, \textit{covering} accesses are added for each malicious edge to avoid exposing the malicious edges when GNIDS is trained.
In our setting, the attacker can directly inject the malicious edges to the logs \textit{after} they are collected by the FL client, under the model poisoning adversary~\cite{bagdasaryan2020backdoor}.
As such, the covering edges are unnecessary. 
In addition, the attacker can scale up the model updates by a constant factor $\gamma$ to outweigh the updates from other benign clients~\cite{bagdasaryan2020backdoor}.

We present the attack pseudo-code in Algorithm~\ref{alg:poison}. %
Specifically, on a controlled client $k$, the attacker needs to enumerate all the sub-graphs $[\mathcal{G}_1^k, \dots, \mathcal{G}_T^k]$ and compare the source and destination nodes to its malicious edges $\mathcal{EM}$ to be used in the testing time. An edge will be added when there is a pair-wise match. We also use a likelihood threshold $p$ to control the number of injected edges.

\ignore{
\begin{algorithm}\label{alg:poison}
\caption{Poisoning attack against \system. 
}
\KwData{Graph $\mathcal{G}$, Node partitions $\mathcal{P}$, Malicious edges $
\mathcal{E}_m$}
\KwResult{Global model $M$(GNN, RNN, DEC)}
$\mathcal{G}_0, \mathcal{G}_1, \dots, \mathcal{G}_T =\text{Separate}(\mathcal{G}, \mathcal{T})$\;
\For{each partition $P_i \in \mathcal{P}$}{
    Initialize local model $M_i$ on $P_i$\;
}
\While{not converged}{
    \For{each partition $\mathcal{G}(P_j  ) \in \mathcal{G}(\mathcal{P})$}{
            $loss=0$\;
        \For{$i\gets0$ \KwTo T }{
            $X_i^{train}, A_i^{train} = \text{data\_split}(\mathcal{G}_i(\mathcal{P}_j))$\;
            \If{random number > p}
            {
               \For{$e \in \mathcal{E}_m(P_j)$}{
                \If{$e \not\in X_i^{train}$}{
                    $X_i^{train}.add(e)$\; 
                    } 
                }
            }
            $Z'_i = \text{GNN}_j(X_i^{train}, A_i^{train})$\;
            $Z_i = \text{RNN}_j(Z'_i)$\;
            $\hat{A}_i = \text{DEC}_j(Z_i)$\;
            $\text{loss} = \text{loss} + L_{ce}$\;  
        }
        loss.$\text{back\_propagate}$\;
    }
    Aggregate local models to update the global model $M$(GNN, RNN, DEC)\;
}
\end{algorithm}
}

\ignore{
\begin{algorithm}\label{alg:poison}
\caption{Poisoning Attack. \zl{needs a major update. we should remove the training part}}
\KwData{Graph $\mathcal{G}$, Node partitions $\mathcal{P}$, Malicious edges $
\mathcal{E}_m$}
\KwResult{Global model $M$(GNN, RNN, DEC)}
$\mathcal{G}_0, \mathcal{G}_1, \dots, \mathcal{G}_T =\text{Separate}(\mathcal{G}, \mathcal{T})$\;
\For{each partition $P_i \in \mathcal{P}$}{
    Initialize local model $M_i$ on $P_i$\;
}
\While{not converged}{
    \For{each partition $\mathcal{G}(P_j  ) \in \mathcal{G}(\mathcal{P})$}{
            $loss=0$\;
        \For{$i\gets0$ \KwTo T }{
            $X_i^{train}, A_i^{train} = \text{data\_split}(\mathcal{G}_i(\mathcal{P}_j))$\;
            \If{random number > p}
            {
               \For{$e \in \mathcal{E}_m(P_j)$}{
                \If{$e \not\in X_i^{train}$}{
                    $X_i^{train}.add(e)$\; 
                    } 
                }
            }
            $Z'_i = \text{GNN}_j(X_i^{train}, A_i^{train})$\;
            $Z_i = \text{RNN}_j(Z'_i)$\;
            $\hat{A}_i = \text{DEC}_j(Z_i)$\;
            $\text{loss} = \text{loss} + L_{ce}$\;  
        }
        loss.$\text{back\_propagate}$\;
    }
    Aggregate local models to update the global model $M$(GNN, RNN, DEC)\;
}
\end{algorithm}
}

\begin{algorithm}[h]
\caption{The proposed poisoning attack. ClientUpdate is the same as Algorithm~\ref{alg:system}. Each client's graph $\mathcal{G}^k = (\mathcal{V}^k, \mathcal{E}^k)$
}
\label{alg:poison}

\KwData{ Number of subgraphs \( T \); Malicious edges \( 
 \mathcal{EM} \); Likelihood threshold $p$}

\BlankLine

\ForEach{client  $k$ from the malicious clients}{

\For{$t\gets1$ \KwTo T }{
            \If{random number $<$ p}
            {
                
               \For{$e \in \mathcal{EM}^k$}{
                \If{$e \not\in \mathcal{E}^k_t$ \textbf{and} $e.src \in \mathcal{V}^k_t$ \textbf{and} $e.dst \in \mathcal{V}^k_t$}{
                    $\mathcal{G}^k_t.add(e)$\;
                    } 
                }
            }
        }
Send $\gamma$ $\times$ ClientUpdate($k, w_i$) to central server in each FL iteration
}
\end{algorithm}

Since the poisoning attack conducted by Xu et al~\cite{xu2023cover} is on Euler+LANL, we focus on attacking the Euler GNIDS. We test both LANL and OpTC datasets to evaluate the generality of the attack. 
In Table~\ref{tab:attack_lanl}, we show the attack result when Euler is trained on LANL under \system. 
We select client 4 as the malicious client, which observes 495 malicious edges (out of 517 total malicious edges). 
Among these edges, 492 are cross-client edges, so the attacker has the motivation to poison the global model to hide their attack from the other clients. 
We define ``success rate'' (SR) as the ratio of malicious edges that evade detection, which is similar to Xu et al.~\cite{xu2023cover}. We also count the number of injected edges per malicious edge (EPM). EPM could be larger than 1 when multiple snapshots are poisoned. 
\ignore{
Like~\cite{xu2023cover}, we fix the classification threshold $\tau$ of the Euler model under different $p$\footnote{We obtained the source code of ~\cite{xu2023cover} from the authors and we set $\tau=0.5$ based on their source code.}. 
\zl{remove this sentence, i think it's wrong to generate AUC and AP under different $\tau$. $\tau$ is already fixed.}
}
$\gamma$ is set to values between 5 and 100 and $p$ is set to values between 25\% to 100\%. The row with ``-'' in $\gamma$ and $p$ in Table~\ref{tab:attack_lanl} shows result without attack. 
The attack is more powerful with larger $\gamma$ and $p$. The result shows \system can bound SR to a low number (9.30\%)\footnote{The implementation of Xu et al. sets the classification threshold of Euler manually and then injects edges according to the threshold. This setting differs from Euler's default setting which learns the classification threshold from a validation set. We follow Euler's default setting, which might lead to worse performance than Xu et al.}. When disabling the norm bounding defense (\system-UB), the training GNIDS model observes higher SR, but more importantly, ``NaN'' (not a number) error occurs with larger $p$ and $\gamma$. In this case,  the gradient updates accumulate substantial changes, leading to sudden large adjustments of the global model weights and gradient explosion.
Training the GNIDS model will fail, which maps to the untargeted attack against FL~\cite{fang2020local}.

\begin{table}[h]
    \centering
    \caption{Attack on LANL ($K=4$). Client 4 is malicious. The upper (``E'') and lower parts (``EUB'') of the table shows the result of \system and \system-UB.
    }
    \begin{tabular}{l|cccccc}
    \hline
        & $p$ & $\gamma$ & AP(\%) & AUC(\%) & SR(\%) & EPM \\     
        \hline
        \multirow{6}{*}{E} & - & - & 0.72 & 97.00 &  - \\
        & 25\% & 5 & 0.54 & 96.96 & 9.30 & 22\\
        & 50\% & 5 & 0.56 & 96.97 &  9.30 & 40\\
        & 50\% & 25 & 0.56 & 96.97 & 9.30 & 40 \\
        & 75\% & 25 & 0.60 & 96.99 & 9.30 & 62 \\
        & 100\% & 100 & 0.55 & 96.96 & 9.30 & 80 \\
        \hline       
        \multirow{3}{*}{EUB} & 25\% & 5 & 0.74 & 97.34 & 11.4 & 22 \\  
        & 50\% & 25 & 0.18 & 96.23 & 40.59 & 22 \\   
 
        & 100\% & 100 & NaN & NaN & - & 80 \\   
        \hline
    \end{tabular}
    \label{tab:attack_lanl}
\end{table}

\revisednew{
Next, we change the malicious client from client 4 to client 1, 2, and 3 separately, and evaluate the impact. We also consider the setting of colluding clients and evaluate \system when both client 1 and 4 are malicious. We fix the attack parameters that $\gamma$=100, $p$=100\%, and ``EPM''  to 80 to maximize attacker's capabilities.
The result on LANL is shown in Table~\ref{tab:malicious_client_number_lanl}. 
SR is still low and we found client collusion does not lead to higher SR against \system. 
}

\begin{table}[h]
    \small
    \centering
    \caption{\revisednew{Different malicious client on LANL.}
    }
    \begin{tabular}{c|ccccccc}
    \hline
        & Client & AP (\%) & AUC (\%) & SR(\%) \\
        \hline
        \multirow{4}{*}{E} &1 & 0.37 & 96.54 & 6.60 \\
        & 2 & 0.53 & 96.93 & 10.48 \\
        & 3 & 0.55 & 96.96 & 8.9 \\
        & 1, 4 & 0.58 & 96.90 & 7.48 \\
        \hline
    \end{tabular}
    \label{tab:malicious_client_number_lanl}
\end{table}

Regarding OpTC, we choose $K=3$ and use client 3 as a malicious client. Similar as the result on LANL, \system is robust even under very large $p$ and $\gamma$, as shown in Table~\ref{tab:attack_optc}. Without the norm-bounding defense, much higher SR is observed or the training process cannot finish.

\begin{table}[h]
    \centering
    \caption{Attack on OpTC ($K=3$). Client 3 is malicious. The upper (``E'') and lower (``EUB'') parts of the table shows the result of \system and \system-UB.
    }
    \begin{tabular}{l|ccccccccccc}
    \hline
        & $p$ & $\gamma$ & AP(\%) & AUC(\%) & SR(\%) & EPM \\
        \hline
        \multirow{5}{*}{E} & - & - & 83.29 & 99.76 &  - \\        
        & 10\% & 2 & 83.70 & 99.72 & 17.65 & 148\\
        & 25\% & 5 & 83.48 & 99.76 & 14.61 & 374\\
        & 50\% & 10 & 85.69 & 99.78 &  13.38 & 764\\
        & 100\% & 100 & 84.91 & 99.75 &  14.20 & 1513\\
        \hline
        \multirow{2}{*}{EUB} & 10\% & 2 &  72.94 &  99.30 & 28.24 & 148\\
        & 50\% & 5 & NaN & NaN & - & 764\\
        & 100\% & 25 & NaN & NaN &  - & 1513\\
        
        \hline
    \end{tabular}
    \label{tab:attack_optc}
\end{table}

\subsection{Ablation Study}
\label{subsec:ablation}

\revisednew{Due to space limit, we provide the results in Appendix~\ref{app:ablation}, including the impact of ACS, graph augmentation, clients' weight initialization, $K$ and $m$, comparison between local and global models, and visualization of MBM clustering.}

\section{Discussion}
\label{sec:discussion}

\zztitle{Limitations and future works.} 
First, we admit that the results of \system on LANL are far from ideal when the redteam events are TP. 
A few reasons have been given in prior works~\cite{king2023euler,xu2024understanding}, including: 1) the labeled malicious events are too coarse-grained; 2) the malicious activities after the redteam's ground-truth events are not tracked; 3) potentially malicious events on included in the ``attack-free'' training period. 
Second, due to the lack of ground truth of the locations of each device in the LANL or OpTC dataset, we choose to run the clustering method MBM~\cite{larroche2022multilayer} to generate FL clients' data. 
The location information is usually not provided from a log dataset and the IP is also anonymized. Clustering is our best effort. Similar approaches have been followed by other FGL works like ~\cite{yao2022fedgcn, zhang2021subgraph, xie2021federated}.
Third, we did not simulate \system and other baseline FL methods in a fully distributed environment, i.e., different clients on different machines. Hence, the actual overhead, especially network communication, should be higher. However, the extra overhead caused by \system over the basic FedAVG method is introduced during bootstrap and ACS, and Section~\ref{subsec:effectiveness} shows it is reasonable. 
Finally, we did not simulate different FL poisoning attacks (surveyed in Section~\ref{sec:related}) besides Xu et al.~\cite{xu2023cover}, due to the gap between the attack methods and the concrete GNIDS setting. We leave the exploration of new attacks as future work.

\zztitle{Privacy implications of \system.}
\revisednew{Due to the usage of FL, the privacy attacks against FL, like membership inference attack (MIA)~\cite{nasr2019comprehensive} and gradient inversion attack (GIA)~\cite{wang2019beyond}, can be utilized to attack \system.} 
To safeguard the privacy of FL clients, differential privacy (DP) can be applied to add noise to the model updates, which hides the existence of a single instance under FL. \revisednew{As such, DP directly defends against MIA~\cite{xie2023unraveling}. 
Hatamizadeh et al. evaluated DP against GIA, and the results show DP-SGD is particularly effective in reducing the leakage from gradients~\cite{hatamizadeh2023gradient}.}
Besides, poisoning attacks can be deterred by DP, as confirmed by previous studies 
~\cite{sun2019can, xie2023unraveling,geyer2017differentially,icml/ZhangCH0Y22}. 
We have conducted a preliminary analysis regarding the combination of DP and \system. As detailed in Appendix~\ref{app:dpfl}, 
our experimental results suggest that achieving strong defense against poisoning while preserving acceptable model utility remains a significant challenge. 

\revisednew{
\zztitle{Other options for cross-silo GNIDS.}
Cryptographic tools such as multiparty computation (MPC) and homomorphic encryption (HE) can be used to train a model using data from different regions with strong privacy protection. However, MPC and HE would incur much higher communication and computational overheads than FL, making them unsuitable to train GNIDS on large-scale log data now. CoGNN~\cite{ccs/ZouLSLXX24}, the SOTA approach for MPC-based graph learning, communicates from 1GB to 4GB \textit{per epoch} on small-scale datasets like Cora (2,708 nodes and 10,556 edges). The overhead will be amplified under larger GNIDS datasets. Meanwhile, \system transmits 1.94MB model weights in total when training Euler on LANL, as described in Section~\ref{subsec:scalability}. 
Moreover, CoGNN only supports static GNN models, but the SOTA GNIDS explicitly leverage temporal information with non-GNN models like GRU (integrated by Euler) and TGN (integrated by Jbeil). 
As a result, we believe FL is the most suitable approach for our problem setting at the current stage. 
}

\section{Related Work}
\label{sec:related}

\zztitle{Host-based Intrusion Detection (HIDS).} 
This work focuses on developing privacy-preserving GNIDS, which consume network logs. Graph learning has also been applied to host logs. We refer interested readers to a survey by Inam et al. ~\cite{inam2023sok} for details. Below we provide a brief overview. The main approach to detect intrusions under HIDS is through \textit{provenance tracking}~\cite{king2005backtracking, jiang2006provenance}, which performs variations of breadth-first search (e.g., under temporal constraints like happens-before relation~\cite{shu2018threat}) to find other attack-related nodes given Indicators of Compromise (IoCs). 
Yet, provenance tracking often leads to a large number of candidate nodes to be investigated~\cite{xu2016high}, and many HIDS add heuristics 
to reduce the investigation scope~\cite{hassan2019nodoze, liu2018towards, milajerdi2019holmes, hossain2020combating}. Another way to reduce the false positives is to simplify the graphs, e.g., through event abstraction~\cite{hassan2020tactical, yu2021alchemist} and graph summation~\cite{xu2022depcomm}. 

Recently, like GNIDS, GNN has also been tested for HIDS and various embedding techniques have been developed/used, including masked representation learning~\cite{jia2023magic}, TGN~\cite{cheng2024kairos}, Graph2Vec ~\cite{yang2023prographer}, embedding recycling~\cite{rehman2024flash}, root cause embedding~\cite{goyal2024r}, etc. It is likely that FL could benefit these HIDS, but a different operational model and/or learning method is needed. For instance, cross-device FL instead of cross-silo FL is a more suitable FL setup.

\zztitle{FL for security.} 
Before our work, FL has been applied in various security-related settings, like risk modeling on mobile devices~\cite{fereidooni2022fedcri}, malicious URL detection~\cite{khramtsova2020federated, ongun2022celest}, detecting abnormal IoT devices~\cite{nguyen2019diot}, malware detection~\cite{rey2022federated}, browser fingerprinting detection~\cite{annamalai2024fp}, etc.
\revisednew{Though FL has been used to train a model to detect intrusions with system logs collected from cloud instances~\cite{de2022interpretable} and security events collected from participating organizations~\cite{naseri2022cerberus}, the models trained by these works (i.e., Transformer and RNN) are tailored to the tabular representation from logs. So far, none of the existing works have trained a GNN model with FL to detect intrusions from large-scale network logs.}

\zztitle{Security and privacy for FL.} 
Here, we present a more detailed overview of security and privacy issues in FL. Regarding security, poisoning (or backdoor) attacks are considered as the major threat~\cite{mothukuri2021survey}, and the previous attacks can be divided into data poisoning~\cite{nguyen2020poisoning, shen2016auror, xie2019dba} that manipulates the clients' training data and model poisoning~\cite{bagdasaryan2020backdoor, wang2020attack, li20233dfed, krauss2024automatic} that manipulates the training process or models themselves. In either case, poisoning will result in deviation of model updates, and the majority of defenses aim to detect and mitigate abnormal deviation~\cite{rieger2022deepsight, nguyen2022flame,kumari2023baybfed, fereidooni2024freqfed, sun2019can,rieger2024crowdguard, kabir2024flshield}. \system integrates norm bounding~\cite{sun2019can} into the training procedure and the empirical and theoretical results show its effectiveness.

To defend FL against inference attacks~\cite{nasr2019comprehensive,wang2019beyond}, differential privacy (DP) has been applied to FL. The noises can be added to each training step with DP-SGD~\cite{sun2021ldp, truex2020ldp}, to the trained local model~\cite{agarwal2021skellam, kairouz2021distributed} and to the central server~\cite{geyer2017differentially, mcmahan2018learning}. Recently, Yang et al. studied the accuracy degradation caused by FL+DP, and exploited client heterogeneity to improve the FL model's accuracy~\cite{yang2023privatefl}. We attempted to integrate DP into \system but the accuracy drop is prominent. We also believe the privacy threat and the privacy notions need to be clearly defined under the GNIDS setting, in order to enable more effective defense.

\section{Conclusion}

In this paper, we propose \system, a new Graph-based Network Intrusion Detection Systems (GNIDS) backed by Federated Learning (FL), to address concerns in data sharing. We carefully tailor the design of \system to the unique settings of the security datasets (e.g., highly imbalanced classes and non-IID client data) and variations of GNIDS tasks/architectures. With new techniques like weight initialization and adaptive contribution scaling, we are able to achieve the desired design goals (effectiveness, scalability and robustness) altogether. The evaluation of three large-scale log datasets, namely OpTC, LANL and Pivoting, shows that \system outperforms the other baseline systems, even including the model trained on the whole dataset in some cases. 
We also conduct adaptive attacks against \system with FL poisoning, and show that \system can bound the attack success rate and ensure the training procedure can finish as desired.
With \system, we hope to encourage more research to address the data sharing issues in building GNIDS, which is understudied, and develop better attack and defense benchmarks to evaluate the robustness of federated graph learning (FGL) systems.

\section*{Acknowledgment}
We thank reviewers for the valuable suggestions. 
This work is supported by NSF CNS-2220434 and Amazon Research Awards. Any opinions, findings, and conclusions or recommendations expressed in this material are those of the author(s) and do not reflect the views of the sponsors.

\bibliographystyle{IEEEtran}
\bibliography{main}

\appendix
\normalsize

\subsection{Algorithm of BA Model}
\label{app:ba}

In Algorithm~\ref{alg:ba}, we describe BA model with pseudo-code.

   \begin{algorithm}[h!]
\caption{Barabási–Albert (BA) Model}
\label{alg:ba}
\KwData{$n$ (number of nodes), $m$ (number of edges to attach from a new node to existing nodes)}
\KwResult{$G$ (generated graph)}
\BlankLine
$G \gets \text{InitializeGraph}(m)$\;
\For{$i = m+1$ \KwTo $n$}{
    $\text{AddNode}(G, i)$\;
    $k_j \gets \text{Degree}(G)$\;
    $K \gets \sum_{j \in G} k_j$\;
    $L \gets \text{EmptyList}()$\;
    \ForEach{$j \in G$}{
        $L.\text{Add}(j, k_j / K)$\;
    }
    $S \gets \text{RandomSelect}(L, m)$\;
    \ForEach{$v \in S$}{
        $\text{AddEdge}(G, i, v)$\;
    }
}
\Return $G$
\end{algorithm}

\subsection{Workflow of \system}
\label{app:workflow}
In Algorithm~\ref{alg:system}, we summarize the workflow of \system.
Specifically, $w_i$ contains the model parameters for the graph encoder, temporal encoder and decoder, which are denoted by  $\text{ENC}(\cdot)$, $\text{TEMP}(\cdot)$ and $\text{DEC}(\cdot)$. For the decoder, only the trainable implementations go through FL. $\text{JS}(\cdot)$ is Jaccard similarity. $\nabla \mathcal{L}$ computes gradients on the training loss, and the same loss function of the GNIDS is used.

\ignore{
\begin{algorithm} 
\caption{\system{} training procedure. \zl{remove RNN, make it more generic}}
\KwData{Graph $\mathcal{G}$, Node partitions $\mathcal{P}$}
\KwResult{Global model $M$(GNN, RNN, DEC)}
$\mathcal{G}_0, \mathcal{G}_1, \dots, \mathcal{G}_T =\text{Separate}(\mathcal{G}, \mathcal{T})$\;
\For{each partition $P_i \in \mathcal{P}$}{
    Initialize local model $M_i$ on $P_i$\;
}
\While{not converged}{
    \For{each partition $\mathcal{G}(P_j) \in \mathcal{G}(\mathcal{P})$}{
            $loss=0$\;
        \For{$i\gets0$ \KwTo T }{
            $X_i^{train}, A_i^{train} = \text{data\_split}(\mathcal{G}_i(\mathcal{P}_j))$\;
            $Z'_i = \text{GNN}_j(X_i^{train}, A_i^{train})$\;
            $Z_i = \text{RNN}_j(Z'_i)$\;
            $\hat{A}_i = \text{DEC}_j(Z_i)$\;
            $\text{loss} = \text{loss} + L_{ce}$\;  
        }
        loss.$\text{back\_propagate}$\;
    }
    Aggregate local models to update the global model $M$(GNN, RNN, DEC)\;
}
\end{algorithm}
}

\begin{algorithm}[h!]
\caption{Global model training under \system. 
}
\label{alg:system}

\KwData{\( E \) is the number of local epochs; \( \eta \) is the learning rate;  Number of maximum FL iterations \( R \); Number of subgraphs \( T \); $c_1$ and $c_2$ are two constants.}
\KwResult{Global model parameters \( w_{i+1}  \)}

\BlankLine

\textbf{Server executes:}\\
Initialize \( w_1 \)\; 
\(\mathcal{G}^{ref}\)  \( \gets \) BA\_Model(\(n\), \(m\))\; 
\ForEach{client \( k \) from all clients}{
        \(S_{Jac}^k\)  \( \gets \)  ClientInitialWeight(\( k, \mathcal{G}^{ref} \))\;
}
\For{\( i = 1 \) \KwTo \( R \)}{
    \ForEach{client \( k \) from all clients}{
        \( w^k_{i} \) \( \gets \) ClientUpdate(\( k, w_i \))\;
    }
    \(S^K_i, D^K_i = \) ACS(\( w_i, w^k_i \))\;
    \( w_{i+1}[\text{ENC}] \) \( \gets \) \( w_i[\text{ENC}] \) +  \( \Sigma^{K}_{k=1} (c_1 \times S^k_{Jac} + c_2 \times  S^k_i \times D^k_i) \times \mathsf{NB}(\Delta w^k_{i+1}[\text{ENC}]) \) \;
    \( w_{i+1}[\text{TEMP}] \) \( \gets \)  \( w_i[\text{TEMP}] + \Sigma^{K}_{k=1} (c_1 \times S^k_{Jac} + c_2 \times  S^k_i \times D^k_i) \times \mathsf{NB}(\Delta w^k_{i+1}[\text{TEMP}] \) \;
    \If{ \text{DEC} is trainable}{
        \( w_{i+1}[\text{DEC}] \) \( \gets \)  \( w_i[\text{DEC}] + \Sigma^{K}_{k=1} (c_1 \times S^k_{Jac} + c_2 \times  S^k_i \times D^k_i) \times \mathsf{NB}(\Delta w^k_{i+1}[\text{DEC}] \) \;
    }
    
    \If{\( \text{early\_stopping(\( w_1, \dots, w_{i+1} \))} \)}{
        \textbf{break}\;
    }
}
\Return \( w_{i+1}  \)

\BlankLine

\SetKwFunction{FClientWeight}{ClientInitialWeight}

\SetKwProg{Fn}{Function}{:}{}
\Fn{\FClientWeight{\( k, \mathcal{G}^{ref} \)}}{
    Generate local graph \(\mathcal{G}^k\)\;
    Add cross-client edges to \(\mathcal{G}^k\)\; 
    \( [\mathcal{G}_1^k, \dots, \mathcal{G}_T^k] \) \( \gets \)  separate(\(\mathcal{G}^k\))\;    
    \Return JS(WLH(\(\mathcal{G}^k\)), WLH(\(\mathcal{G}^{ref}\))) to server\;
}

\BlankLine
\SetKwFunction{FACS}{ACS}

\SetKwProg{Fn}{Function}{:}{}
\Fn{\FACS{\(w_i, w^k_i \)}}{
    Compute \(S^K_i\) and \(D^K_i\) \;
    \Return \(S^K_i, D^K_i\) to server\;
}

\SetKwFunction{FClientUpdate}{ClientUpdate}

\SetKwProg{Fn}{Function}{:}{}
\Fn{\FClientUpdate{\( k, w \)}}{
    \For{each local epoch $i$ from 1 to E}{
         $w \gets w - \eta \nabla \mathcal{L}(w;[\mathcal{G}_1^k, \dots, \mathcal{G}_T^k] ) $\;
    }
    \Return \( w \) to server\;
}
\end{algorithm}

\subsection{Datasets and pre-processing}
\label{app:datasets}

The OpTC dataset~\cite{optc} contains the telemetry data collected under the DARPA TC program~\cite{darpa-tc}, during which APT attacks were simulated on different OSes. The host-level activities between subjects like processes and objects like files and sockets were logged. Like Euler~\cite{king2023euler}, we use the ``START'' events related to the ``FLOW'' objects (i.e., network flows). The nodes are hosts distinguished by IP addresses and the flows between hosts are merged into edges. The statistics after this step are shown in Table~\ref{tab:dataset_all}. We split the data into 6 minutes (360s) window.
We use the first 5 days' snapshots (no redteam events exist) for training and the remaining 3 days' snapshots for testing. OpTC is only tested by Euler and we report its result accordingly.

The LANL dataset~\cite{lanl-ds-15} contains anonymized event data from four sources within Los Alamos National Laboratory's internal computer network. We use the authentication logs from individual computers and domain controller servers following~\cite{king2023euler,khoury2024jbeil}.
Simulated redteam attacks are conducted from the compromised machines. As shown in Table~\ref{tab:dataset_all}, the malicious events consist of a fairly small portion among all events, posing a great challenge to GNIDS. 
Regarding data pre-processing, like ~\cite{king2023euler,khoury2024jbeil}, we only keep the events with the keyword ``NTLM''~\cite{ntlm}, as other events are unrelated to authentications. For Euler, the logs are split by a 30-minute (1,800s) window into snapshots.
We use ``source computer'' and ``destination computer'' as nodes and all events sharing the same pairs of nodes are merged into an edge. The first 42 hours are used to train the Euler GNIDS. On average, each snapshot has 7,957 edges and we use 5\% edges for validation. After 42 hours, redteam events appear and the following snapshots are used for testing. 
For Jbeil, we follow their pre-processing procedures to use the events that happened during the first 14 days, including 12,049,423 events. 
Different from Euler, Jbeil did not use the LANL's redteam events as malicious samples. Instead, it injects non-existent edges (i.e., negative sampling) and conducts link prediction (i.e., predicting an edge's existence in testing)\footnote{Jbeil has another mode that uses a lateral-movement simulator to inject malicious edges, but this simulator has not been released. We have contacted the authors and confirmed it.}. We use 70\% events for training, 15\% for validation and the remaining 15\% for testing. For the transductive learning mode, the training and testing sets share the same set of nodes. For the inductive learning mode, 30\% of the nodes are hidden during the training but unmasked during testing, following Jbeil's setting.

The Pivoting dataset~\cite{apruzzese2017detection} comprises network flow data, collected over a full working day under a large organizational setting, exclusively representing internal-to-internal communications among hosts. The simulated pivoting activities for lateral movement are considered malicious. 
It is only tested by Jbeil
and we follow its setting and use the first 1,000,000 events, which correspond to 698 nodes. Again, the data is split into 70\% for training, 15\% for validation, and the remaining 15\% for testing. Link prediction is conducted and we test both transductive and inductive learning modes.

\subsection{More Results on LANL, OpTC and Pivoting Dataset}
\label{app:lanloptc}

Table \ref{tab:comparison_jbeil_pivoting} shows the results on the Pivoting dataset when Jbeil is the GNIDS. 
We observe that \system achieves the best AUC in most cases ($K=2,4$ for transductive mode and $K=3,4$ for inductive mode). FedOpt outperforms \system when $K=2$ and \system-UB outperforms when $K=3$ for transductive but the differences are small (less than 5\%). Though non-FL achieves the best AP and AUC in nearly all cases, the margins over \system are also small (less than 7\% AP and 4\% AUC).%

\begin{table*}[h]
    \small
    \centering
    \caption{Evaluation on Pivoting with Jbeil. Non-FL is listed for ease of comparison. All numbers are used in the percent format. The \colorbox{colorbest}{\color{colortext}{best}} and the \colorbox{colorsecond}{\color{colortext}second-best} among FL methods are highlighted respectively. \greenup means outperforming non-FL and \reddown means the worst performance.}
    \begin{tabular}{l|cc|cc|cc|cc|cc|cc}
    \hline
    \multicolumn{7}{c|}{\textbf{{\small Pivoting-Jbeil-Transductive}}} & \multicolumn{6}{c}{\textbf{{\small Pivoting-Jbeil-Inductive}}} \\
    {\small Client\#} & \multicolumn{2}{c}{\textbf{2}} & \multicolumn{2}{c}{\textbf{3}} & \multicolumn{2}{c|}{\textbf{4}} & \multicolumn{2}{c}{\textbf{2}} & \multicolumn{2}{c}{\textbf{3}}& \multicolumn{2}{c}{\textbf{4}} \\
    \hline
    {\small Algorithm} & AP & AUC & AP & AUC & AP & AUC  & AP & AUC & AP & AUC & AP & AUC \\

    Non-FL       & 96.26$\tbspace$ & 97.05$\tbspace$ & 96.26$\tbspace$ & 97.05$\tbspace$ & 96.26$\tbspace$ & 97.05$\tbspace$ & 95.92$\tbspace$ & 96.87$\tbspace$ & 95.92$\tbspace$ & 96.87$\tbspace$ & 95.92$\tbspace$ & 96.87$\tbspace$ \\
    \hline
    FedAVG       & 82.07$\tbspace$ & 88.70$\tbspace$ & 68.81\reddown & 75.70\reddown & 81.19$\tbspace$ & 87.26$\tbspace$ & 83.59$\tbspace$ & 89.46$\tbspace$ & 71.05\reddown & 78.06\reddown & 80.80$\tbspace$ & 86.99$\tbspace$ \\
    FedOpt       & \cellcolor{colorbest}\color{colortext}{95.81}$\tbspace$ & \cellcolor{colorsecond}\color{colortext}97.02$\tbspace$ & 81.14$\tbspace$ & 87.37$\tbspace$ & 90.28$\tbspace$ & 93.97$\tbspace$ & \cellcolor{colorbest}\color{colortext}{96.21}\greenup & \cellcolor{colorbest}\color{colortext}{97.44}\greenup & 85.70$\tbspace$ & 90.87$\tbspace$ & 91.65$\tbspace$ & 94.94$\tbspace$ \\
    FedProx      & 81.01$\tbspace$ & 87.34$\tbspace$ & 80.72$\tbspace$ & 86.71$\tbspace$ & 58.62\reddown & 62.56\reddown & 80.58$\tbspace$ & 87.17$\tbspace$ & 81.53$\tbspace$ & 87.40$\tbspace$ & 58.08\reddown & 61.99\reddown \\
    \systemNumber & 74.14\reddown & 80.64\reddown & \cellcolor{colorbest}\color{colortext}{93.38}$\tbspace$ & \cellcolor{colorbest}\color{colortext}{96.21}$\tbspace$ & \cellcolor{colorsecond}\color{colortext}95.79$\tbspace$ & \cellcolor{colorsecond}\color{colortext}97.50\greenup & 72.82\reddown & 78.81\reddown & \cellcolor{colorsecond}\color{colortext}87.62$\tbspace$ & 92.41$\tbspace$ & \cellcolor{colorsecond}\color{colortext}92.68$\tbspace$ & \cellcolor{colorsecond}\color{colortext}95.84$\tbspace$ \\
    \system-UB   & 93.55$\tbspace$ & 96.00$\tbspace$ & 89.72$\tbspace$ & 93.91$\tbspace$ & 94.78$\tbspace$ & 96.24$\tbspace$ & 91.46$\tbspace$ & 94.86$\tbspace$ & 87.53$\tbspace$ & \cellcolor{colorsecond}\color{colortext}92.58$\tbspace$ & 90.96$\tbspace$ & 93.97$\tbspace$ \\
    \system      & \cellcolor{colorsecond}\color{colortext}95.54$\tbspace$ & \cellcolor{colorbest}\color{colortext}{97.21}\greenup & \cellcolor{colorsecond}\color{colortext}90.98$\tbspace$ & \cellcolor{colorsecond}\color{colortext}94.68$\tbspace$ & \cellcolor{colorbest}\color{colortext}{96.40}\greenup & \cellcolor{colorbest}\color{colortext}{97.77}\greenup  & \cellcolor{colorsecond}\color{colortext}92.88$\tbspace$ & \cellcolor{colorsecond}\color{colortext}95.85$\tbspace$ & \cellcolor{colorbest}\color{colortext}{89.05}$\tbspace$ & \cellcolor{colorbest}\color{colortext}{93.58}$\tbspace$ & \cellcolor{colorbest}\color{colortext}{93.19}$\tbspace$ & \cellcolor{colorbest}\color{colortext}{96.09}$\tbspace$ \\

    \hline    
    \end{tabular}
    \label{tab:comparison_jbeil_pivoting}
\end{table*}

\subsection{Ablation Study}
\label{app:ablation}

\ignore{
\begin{table}[h]
    \centering
    \caption{KL scores on different client numbers.\zl{since we talk about OpTC regarding client number, we also need its result}\jxu{done}\zl{emmm, i don't think it can explain why LANL 2 and 5 are bad, how about computing the node number variance? }\jxu{Done}
    }    
    \begin{tabular}{c|ccccc}
    \hline
    Dataset & 2 & 3 & 4 & 5  \\
    \hline
    OpTC & 3.30 & 3.91 & 2.94 & - \\    
    LANL & 2.04 & 2.24 & 2.47 & 2.38\\
    \hline
    \end{tabular}
    \label{tab:kl_client}
\end{table}
}

\zztitle{Impact of ACS.} 
We introduce ACS to dynamically adjust the clients' weights of each iteration. Here we measure its contributions by comparing the AP and AUC of \system with and without it. We found that ACS increases AP and AUC in nearly every combination of GNIDS, dataset, and $K$. Taking AP as an example, the increases averaged among $K$ are 7.76\% for OpTC+Euler, 0.24\% for LANL+Euler, 14.63\% (Transductive) and 16.68\% (Inductive) for LANL+Jbeil, and 6.53\% (Transductive) and 7.53\% (Inductive) for Pivoting+Jbeil.

In Figure~\ref{fig:weight_update4}, we illustrate the changes of client weight along FL iterations, and we show the result of LANL+Euler when $K=4$ due to space limit. It turns out ACS is able to notably adjust the client weights, e.g., Client 3 and 4 after Epoch 15, which addresses the issue of gradient instability when updating the global model.

\begin{figure}
    \centering
    \includegraphics[width=1.0\linewidth]{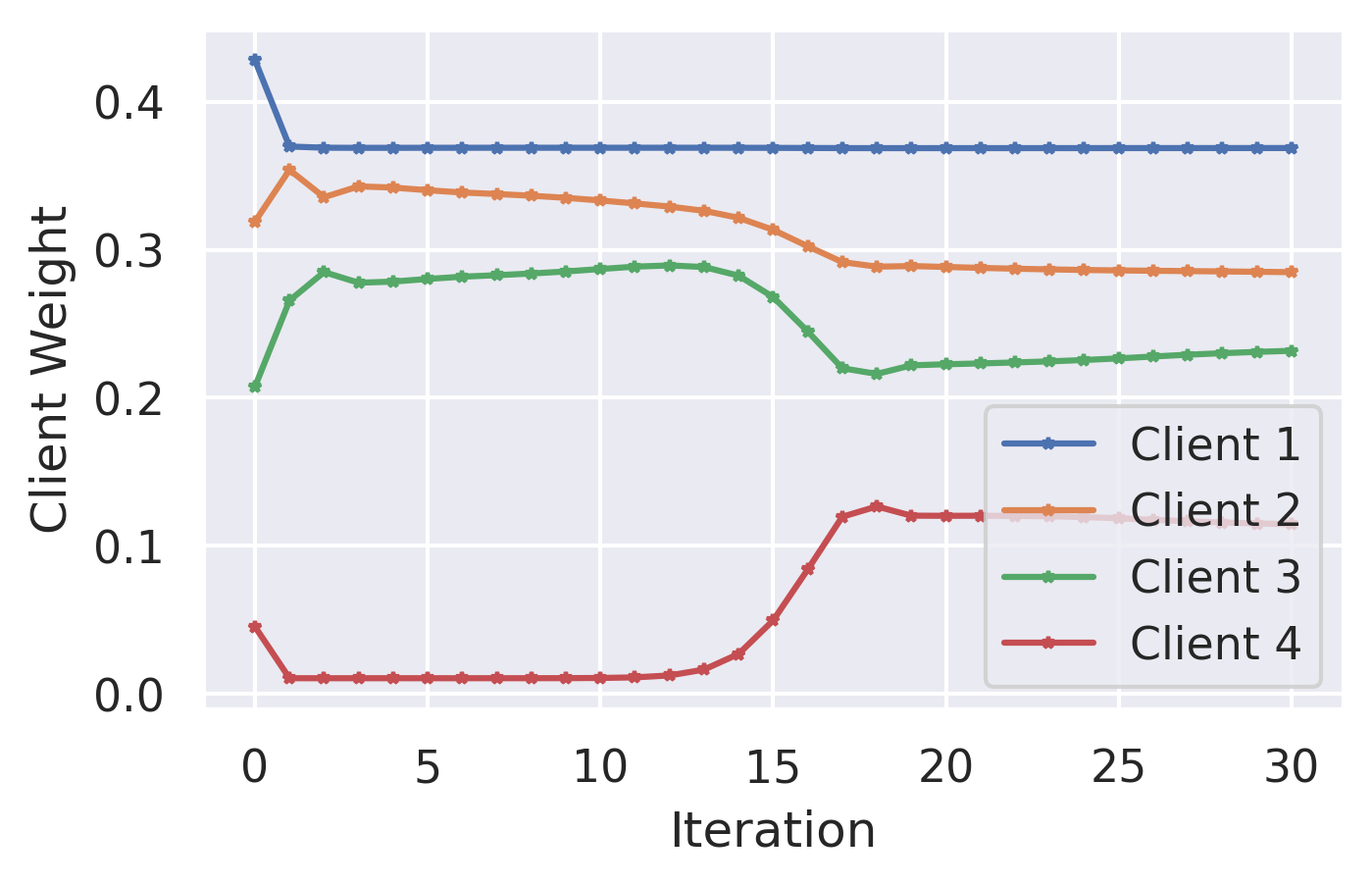}
    \caption{The client weight updating on LANL ($K=4$). 
    }
    \label{fig:weight_update4}
\end{figure}

\zztitle{Impact of graph augmentation.}
We propose to augment each client's subgraph with 1-hop cross-client edges as described in Section~\ref{subsec:local_graph}. Here we evaluate the contribution of this technique. We rerun the experiments for OpTC ($K=3$) and LANL ($K=4$)  when Euler is the GNIDS without adding the cross-client edges. The result is shown in Table~\ref{tab:graph_aug}. It shows prominent improvement with graph augmentation: over 82\% and 56\% increase in AP and AUC on OpTC is achieved, and 0.61\% and 0.7\% increase in AP and AUC on LANL is achieved.

\begin{table}[h]
    \caption{Analysis of graph augmentation. ``1-hop'' is the default setting. ``None'' means the cross-client edges are not added. Euler is the GNIDS.}
    \centering
    \begin{tabular}{l|c|cc}
    \hline
        Dataset & Augmentation & AP(\%) & AUC(\%) \\
        \hline
        \multirow{2}{*}{OpTC ($K=3$)} & None & 1.01 & 43.80 \\
        & 1-hop & 83.29 & 99.76 \\   
        \hline        
        \multirow{2}{*}{LANL ($K=4$)} & None & 0.11 & 96.23  \\
        & 1-hop & 0.72 & 97.00   \\
        \hline
    \end{tabular}
    \label{tab:graph_aug}
\end{table}

\revisednew{\zztitle{Clients' weight initialization.}
We leverage BA model to initialize the clients' weight to address the issue of data heterogeneity. Here we evaluate the contribution of this component.
Table~\ref{tab:ablation_bamodel} compares the setting with BA model enabled and disabled (i.e., equal initial weights for all clients). Results at early iterations (3 and 5) show that the BA model helps \system achieve higher AP and AUC scores, resulting in faster convergence and better early-stage learning performance.
}
\begin{table}[]
    \centering
    \caption{\revisednew{The ablation study of the clients' weight initialization. ``Equal'' means an equal initial weight for each client.}}    
    \begin{tabular}{lc|cc|cc}
    \hline
        & &\multicolumn{2}{c|}{Iteration=3} & \multicolumn{2}{c}{Iteration=5} \\
        Dataset & K & AP (\%) & AUC (\%) & AP (\%) & AUC (\%) \\
        \hline
        OPTC (equal) & 3     & 27.28 & 94.12 & 47.61 & 96.26 \\
        OPTC (\system) & 3 & 28.12 & 97.61 & 73.95 & 97.88 \\
        LANL (equal)  & 4   & 0.46  & 96.75 & 0.61  & 97.10 \\
        LANL (\system) & 4 & 0.85  & 97.21 & 0.64  & 98.80 \\
    \hline
    \end{tabular}
    \label{tab:ablation_bamodel}
\end{table}

\zztitle{Client number ($K$).} 
Here, we take a closer look at the effectiveness results under different  $K$. We found that \system, FedOpt and FedProx are less sensitive to different $K$ values, suggesting their robustness under different client settings. The changes on FedAvg can be drastic, e.g., AP dropping from 75.19\% ($K=5)$ to 33.69\% ($K=2$) for OpTC+Euler, and from 88.54\% ($K=2$) to 50.94\% ($K=4$) for LANL+Jbeil+Transductive. Though \systemNumber sometimes can outperform \system, it is not as stable as \system, e.g., AP dropping from 93.38\% to 74.14\% for Pivoting+Jbeil+Transductive.

\begin{table}[h]
    \centering
    \caption{Standard deviation of node and event numbers among different clients under different $K$. ``Non-FL'' shows the numbers when GNIDS is trained on the whole dataset. ``+M'' and ``+K'' mean million events and thousand events.  }
    \begin{tabular}{c|ccc|ccc}
    \hline
    & \multicolumn{3}{c|}{\textbf{Node}} & \multicolumn{3}{c}{\textbf{Events}} \\
    & OpTC & LANL & Pivoting & OpTC & LANL & Pivoting \\
    \hline
    Non-FL & 814 & 17,649 & 1,015 & 92+M & 1,051+M & 74+M \\
    2      & 475 & 10,858 & 112   & 864+K     & 33+K        & 36+M \\
    3      & 408 & 5,626  & 139   & 759+K     & 5+M     & 34+M \\
    4      & 359 & 5,139  & 107   & 673+K     & 6+M     & 31+M \\
    5      & 324 & 4,721  & -     & 609+K     & 6+M     & - \\
    \hline
    \end{tabular}
    \label{tab:kl_client}
\end{table}

\begin{figure}[h]
    \centering

    \includegraphics[width=0.45\textwidth]{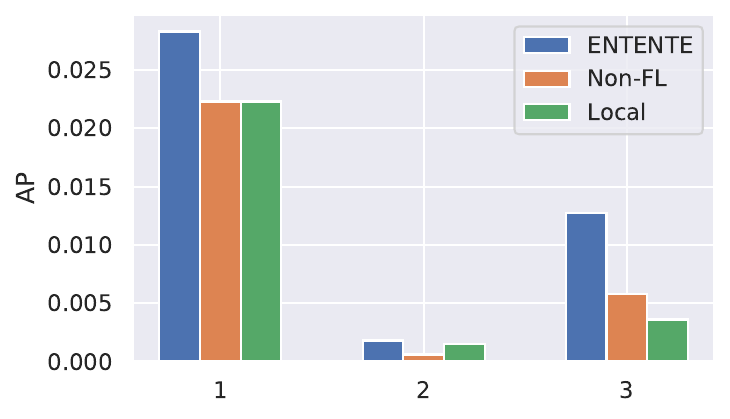}
    \includegraphics[width=0.45\textwidth]{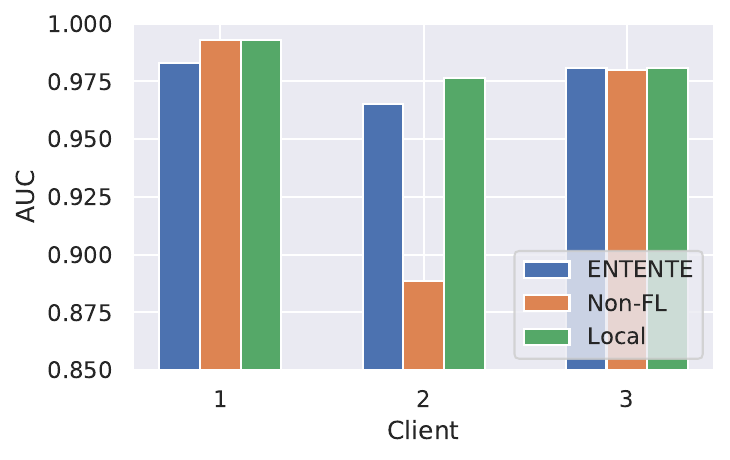}
    \caption{The AP and AUC of \system's global model, non-FL model and local model on each local dataset. LANL+Euler is tested and $K=3$.
    }
    \label{fig:client_lanl}
\end{figure}

We also examine the client clusters, by measuring the standard deviation of the node numbers and event numbers across clients. The result is shown in Table~\ref{tab:kl_client}. For LANL, the standard deviation of node numbers is the highest when $K=2$, which is over 40\% of the total node numbers, and it is significantly reduced for when $K>2$. \system turns out to achieve stable AP and AUC across different $K$, especially when Euler is the GNIDS, suggesting \system can self-adjust to different client distributions. 
For OpTC, we also observe a high standard deviation of node numbers. For Pioviting, though the standard deviation of node numbers is relatively small, we observe a very high standard deviation of event numbers. Overall, the data distribution reflects the non-iid challenge discussed in Section~\ref{subsec:goals}.

\ignore{Naseri et al. proposed a ``non-IIDness score'' to measure the client's data distribution~\cite{naseri2022cerberus}, and we also borrow this idea to explain our results. Naseri et al. compute the pair-wise KL divergence~\cite{kullback1951information} on the class histograms between different clients and then take the average. However, our training data only has one class, benign edges. So we make adjustments to compute the non-IIDness score on the WLH generated from each client's graph (WLH described in Algorithm~\ref{alg:wlh}). The result is listed at Table~\ref{tab:kl_client}. \zl{jason, i'll write it after seeing your result}
}

\zztitle{The number of initial edges for a new node ($m$).} 
We use the BA model to generate the reference graph. We set $m$ as 5 by default. Decreasing and increasing $m$ could make the reference graph denser or sparser, which could impact the weights computed for each client. To assess its impact, we set $m=2, 4, 5, 10, 100$ and compute AUC and AP on LANL with Euler GNIDS ($K=3$). 
We find the result does not change, suggesting \system{} is not sensitive to the value of $m$.
\ignore{
\begin{table}[h]
    \centering
    \caption{Different m weight calculation methods on LANL. The client number is 4. \jxu{different m does not impact the performance. tried m=2, 4, 5, 10, 100.}}
    \begin{tabular}{lcccccccccc}
    \hline
        m & AP & AUC & Precision & Recall & FPR\\
        \hline
        3 & \\
        4 & \\
        5 &  0.0627 & 0.9697 & 0.0044 & 0.9014 & 0.0074 \\
        6 & \\
        \hline
    \end{tabular}
    \label{tab:m_lanl}
\end{table}
}

\zztitle{Comparison between local and global models.} 
Each client obtains a global model from the central server after FL. On the other hand, the client can just train a \textit{local} model on its own data and ignore the global model, when it considers the global model does not bring benefit. As such, we compare \system's global model with the trained local model on each client. For a fair comparison, the cross-client edges have been added to each client's graph when training the local model. We also test how non-FL performs on each client's data separately. 

In Figure~\ref{fig:client_lanl}, we show the comparison of \system's global model, Non-FL, and local model on 3 clients of LANL,. For AP, \system outperforms the other methods on all clients (2.83\%, 0.18\%, 1.27\%). 
Client 1 and client 3 have a higher AP margin because many malicious edges (47.41\% and 40.79\%) reside in both clients. The result shows that for a client,  aggregating information from other clients is beneficial.

\zztitle{Cluster generation.}
We leverage MBM to cluster machines of a log dataset and form clients. Here we provide a visualized example of the clustering result on LANL in Figure~\ref{fig:lanl_network} with the MBM code~\cite{mbm}. We set the number of machine clusters, which equals to the client number $K$, to 4.
MBM is able to generate machine clusters of variable sizes (ranging from 129 to 15,941) and preserve the prominent interactions between user clusters and machine clusters (e.g., host cluster 1 connects to user cluster 0, 1 and 2).

\begin{figure*}
    \centering
    \includegraphics[width=1.0\linewidth]{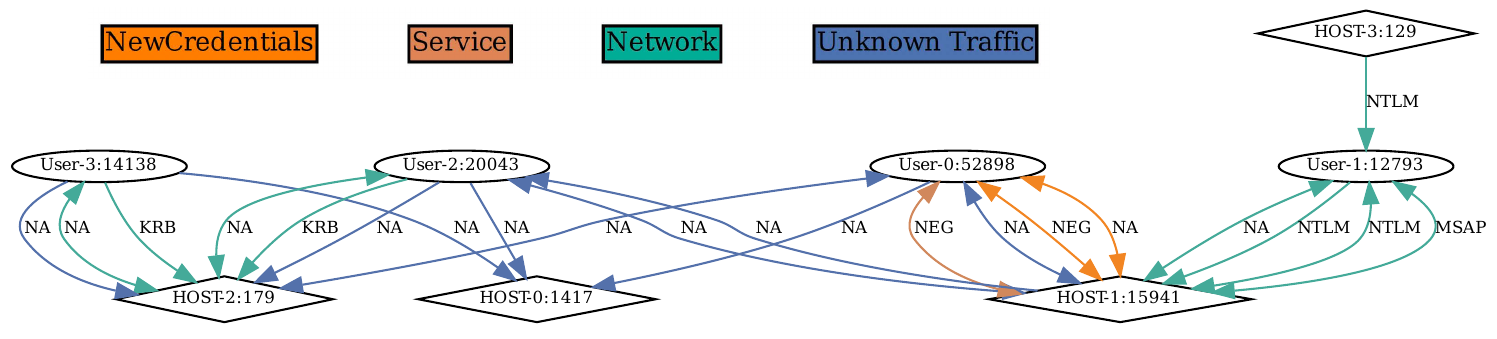}
    \caption{Visualization of the clustering done by MBM for LANL ($K=4$). We use ellipses to show user clusters and the numbers in the ellipses are the user numbers. For hosts (or machines), we use diamonds and provide host numbers.
    The legend in different colors summarizes the major traffic type between user clusters and machine clusters: New Credentials, Service, Network and Unknown Traffic. 
    Each edge represents one protocol type.
    ``KRB'' means Kerberos; ``NEG'' means Negotiate; ``MSAP'' means Microsoft Authentication Package. ``'NA'' means unknown protocols. 
    }
    \label{fig:lanl_network}
\end{figure*}

\begin{table}[h]
    \centering
    \caption{Defenses against strongest poisoning ($p=100\%$, $\gamma=100$) on LANL ($K=4$).    
    Client 4 is malicious. ``-'' in ``Defense'' means no defense is enabled. 
    }
    \begin{tabular}{llccc}
    \hline
    Type & Defense & AP(\%) & AUC(\%) & SR(\%) \\
    \hline
    \multirow{3}{*}{No attack} & - & 0.72 & 97.00 & - \\
    & Weak DP & 0.06 & 95.23 & - \\ %
    & CDP & 0.54 & 96.36 & - \\ %
    \hline
    \multirow{2}{*}{With attack} & Weak DP  & 0.11 & 95.21 & 2.77  \\
    & CDP & 0.54 & 96.37 & 9.10 \\
    \hline
    \end{tabular}
    \label{tab:defense_lanl}
\end{table}

\begin{table}[h]
    \centering
    \caption{Defenses against strongest poisoning ($p=100\%$, $\gamma=100$) on OpTC ($K=3$). 
    Client 3 is malicious.
    }
    \begin{tabular}{llccccccccc}
    \hline
    Type & Defense & AP(\%) & AUC(\%) & SR(\%) \\
    \hline
    \multirow{4}{*}{No attack} & - & 83.29 & 99.76 & - \\    
    & Weak DP & 74.75 & 96.66 & - \\ %
    & CDP & 24.04 & 86.94 & - \\ %
    \hline
    \multirow{3}{*}{With attack}  & Weak DP  & 74.01 & 96.66 & 16.25  \\
    & CDP & 24.03 & 86.95 & 38.66 \\
    \hline
    \end{tabular}
    \label{tab:defense_optc}
\end{table}

\subsection{Differential Privacy for \system}
\label{app:dpfl}

Specifically, Equation~\ref{eq_update_bounding} can be adjusted to integrate DP as written below:
\begin{equation}
\label{eq_dp}
    w_{i+1} =w_{i} +\frac{1}{K}\sum^{K}_{k=1} r^k_{k}\times   \mathsf{NB}(\Delta w^{k}_{i+1}) + \mathcal{N}(0,M_{\mathsf{qs}}^2\sigma^2)
\end{equation}
We are able to support weak DP~\cite{sun2019can} and CDP~\cite{geyer2017differentially,icml/ZhangCH0Y22} defenses with Equation~\ref{eq_dp}.
\revisednew{For theoretical completeness, \system realizes the DP guarantee parameterized by $\epsilon =O(K{M_{qs} \sqrt{R \log (1/\delta)}}/{\sigma})$, which is proved in Appendix~\ref{app:proof}.
\begin{theorem}
\label{the:dpEntente}
    Consider that $K$ clients collaboratively train a model in \system for $R$ rounds. Through Equation~\ref{eq_dp}, \system realizes $(K{M_{qs} \sqrt{R \log (1/\delta)}}/{\sigma}, \delta)$-DP.
\end{theorem}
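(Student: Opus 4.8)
The plan is to treat each round of Equation~\ref{eq_dp} as one instance of the Gaussian mechanism applied to the aggregated client contribution, and then to compose the $R$ rounds with the advanced (strong) composition theorem for $(\epsilon,\delta)$-differential privacy. First I would fix the privacy granularity (client-level: two neighboring datasets differ in the entire local data of a single client) and bound the $L_2$-sensitivity of the per-round query $q_i=\frac{1}{K}\sum_{k=1}^{K} r_i^k\,\mathsf{NB}(\Delta w_{i+1}^k)$. The crucial ingredient is the norm-bounding operator: from its definition $\mathsf{NB}(\Delta w_{i+1}^k)=\Delta w_{i+1}^k/\max(1,\lVert \Delta w_{i+1}^k\rVert_2/M)$ we have $\lVert \mathsf{NB}(\Delta w_{i+1}^k)\rVert_2\le M$ for every client, so changing one client's data perturbs $q_i$ by a bounded amount. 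Combining this cap with the weight bound $r_i^k\le c_1+c_2\omega$ established in Theorem~\ref{the:diff_from_fedavg} yields a finite per-round sensitivity, whose scale I would collect into the quantity $M_{\mathsf{qs}}$ used to calibrate the injected noise.

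Next I would invoke the Gaussian mechanism: adding $\mathcal{N}(0,M_{\mathsf{qs}}^2\sigma^2)$ noise calibrated to this sensitivity gives a single-round $(\epsilon_0,\delta_0)$ guarantee whose $\epsilon_0$ is governed by the ratio of sensitivity to noise scale together with a $\sqrt{\log(1/\delta_0)}$ factor, following the standard calibration of the Gaussian mechanism. I would then compose the $R$ training rounds with the advanced composition theorem, which turns the $R$-fold composition of $(\epsilon_0,\delta_0)$-DP mechanisms into $(\epsilon,\delta)$-DP with $\epsilon=O(\epsilon_0\sqrt{R\log(1/\delta)})$. Substituting the single-round parameters then produces the claimed $\epsilon = K M_{\mathsf{qs}}\sqrt{R\log(1/\delta)}/\sigma$, where the $\sqrt{R\log(1/\delta)}$ dependence is precisely the signature of advanced composition, and the $K$ and $M_{\mathsf{qs}}$ factors are inherited from the sensitivity of the aggregated client contribution in $q_i$.

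The main obstacle I anticipate is the sensitivity bookkeeping, i.e.\ pinning down exactly where the $K$ and $M_{\mathsf{qs}}$ factors originate: one must reconcile the $1/K$ averaging in Equation~\ref{eq_dp}, the per-client norm cap $M$, the adaptive weights $r_i^k$, and the client-level notion of neighboring datasets, so that the noise variance $M_{\mathsf{qs}}^2\sigma^2$ is correctly matched to the query's $L_2$-sensitivity. A secondary care point is that the per-round mechanisms compose \emph{adaptively} (round $i$'s query depends on the previously released noisy iterates $w_i$), so I would state and apply the advanced composition theorem in its adaptive form to legitimately obtain the $\sqrt{R\log(1/\delta)}$ scaling rather than the loose linear-in-$R$ bound of basic composition.
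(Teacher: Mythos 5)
Your high-level scaffolding (Gaussian mechanism per round, then advanced composition over $R$ rounds to get the $\sqrt{R\log(1/\delta)}$ factor) matches the second half of the paper's argument, but you diverge from the paper at the point that actually produces the factor of $K$, and your version of that step does not work. The paper does \emph{not} model the round as a single aggregated query: it treats each of the $K$ clients as releasing its own independently noised output, applies the Gaussian mechanism per client to get $\epsilon_0 = O\bigl(M_{\mathsf{qs}}\sqrt{\log(1/\delta)}/\sigma\bigr)$, and then invokes \emph{basic} composition across the $K$ simultaneously observable client mechanisms to obtain a per-round loss $\epsilon_1 = K\epsilon_0$; only after that does it apply advanced composition over the $R$ rounds. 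The linear-in-$K$ factor in the theorem is therefore a composition artifact over clients, not a sensitivity constant.

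Your route instead defines the per-round query as $q_i=\frac{1}{K}\sum_{k} r_i^k\,\mathsf{NB}(\Delta w_{i+1}^k)$ and claims the $K$ factor is ``inherited from the sensitivity of the aggregated client contribution.'' Carried out honestly, that bookkeeping goes the other way: with the norm cap $\lVert \mathsf{NB}(\Delta w_{i+1}^k)\rVert_2\le M$, the weight bound $r_i^k\le c_1+c_2\omega$, and the explicit $1/K$ averaging, changing one client's entire local data perturbs $q_i$ by at most $O\bigl((c_1+c_2\omega)M/K\bigr)$ --- a sensitivity that \emph{shrinks} with $K$. Feeding that into the Gaussian mechanism and advanced composition yields an $\epsilon$ that decreases (or at best is independent of) $K$, which contradicts the $K M_{\mathsf{qs}}\sqrt{R\log(1/\delta)}/\sigma$ you need to land on. So either you adopt the paper's per-client-mechanism view and add the basic-composition-over-$K$ step, or you keep the aggregated view and end up proving a different (in fact tighter) bound than the theorem states; as written, your proposal asserts the theorem's constant without a derivation that produces it. One genuine point in your favor: you correctly insist on the \emph{adaptive} form of advanced composition (round $i$'s query depends on the released $w_i$), a subtlety the paper's proof glosses over.
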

} 
Their difference is the value of $M_{\mathsf{qs}}$.
Weak DP extends norm bounding by adding a Gaussian noise under small variance $\sigma$ to each model update, and the value of $M_{\mathsf{qs}}$ is relatively smaller than expected as the standard DP.
CDP requires the noise level to be proportional to the sensitivity.

We set $\omega$, $M$, $\epsilon$ and $\delta$ to 5.0, 5.0, 1.0, 0.1 and $\sigma$ as 1.0 for Weak DP and 0.2 for CDP after empirical analysis. 
The results of LANL and OpTC from Table~\ref{tab:defense_lanl} and Table~\ref{tab:defense_optc} show that though both defenses are effective against our poisoning attack, with Weak DP more effective in dropping SR, they also significantly drop GNIDS's AP even when no attack is launched (e.g., 0.0072 dropped to 0.0006 for LANL+Weak DP and 0.8329 dropped to 0.2404 for OpTC+CDP).

\subsection{Theoretical Analysis}
\label{app:proof}
\subsubsection{Proof of Theorem~\ref{the:diff_from_fedavg}}
\label{app:diff_from_fedavg}

\begin{proof}
    With Equation~\ref{equ:w_update}, we have, 
    \begin{equation*}
    \begin{aligned}
        | w_{i+1}| &= |\sum^{K}_{k=1} (c_1 \times \jsequ^k + c_2 \times S^k_i \times D^k_i) \times w^k_i|\\
        &\leq \sum^{K}_{k=1} |(c_1 \times \jsequ^k + c_2 \times S^k_i \times D^k_i) \times w^k_i|\\
        &\leq \sum^{K}_{k=1} |(c_1 \times \jsequ^k + c_2 \times S^k_i \times D^k_i) |\times| w^k_i|\\
    \end{aligned} 
    \end{equation*}
Given Equation~\ref{eq_dis}, we know that $D^k_i= \omega$ if $ \sqrt{\sum(w_i^k-w_{i-1})^2}\geq\omega$; otherwise $D^k_i= \sqrt{\sum(w_i^k-w_{i-1})^2}$ if $ \sqrt{\sum(w_i^k-w_{i-1})^2}<\omega$.
Thus, we bound $D_1^k  \leq \omega$ for any $\omega>0$. 
In practice, $\omega=1$ can be done by normalization.

 Jaccard similarity $\jsequ^k$ and Cosine similarity $S^k_i$ satisfy  $0\leq \jsequ^k,S^k_i\leq1$.
 For a preset positive integer $K$, we can simplify the inequality above and get,
\begin{equation*}
    \begin{aligned}
        | w_{i+1}| 
        &\leq \frac{1}{K}\sum^{K}_{k=1} |(c_1 \times 1 + c_2 \times 1 \times \omega) |\times| w^k_i|\\
        &\leq \sum^{K}_{k=1} \frac{|(c_1  + c_2 \omega) |\times| w^k_i|}{K}\\
 &= |(c_1  + c_2 \omega)| \frac{\sum^{K}_{k=1} | w^k_i|}{K}\\  
 \Rightarrow |\frac{w_{i+1}}{(1/K)\sum_k w^k_i}| &\leq c_1  + c_2 \omega\\
    \end{aligned} 
    \end{equation*}

\end{proof}

\subsubsection{Proof of Theorem~\ref{the:convergence}}
\label{app:nb_proof}
In \system, at each round $i\in [R]$, we solve an optimization problem in the following form,
\begin{equation}
    \min_{x\in \Rbb^d} f(w_{i+1})=\frac{1}{K}\sum_{k=1}^{K} F_k(w_i)
\end{equation}
where $k\in [K]$ be a client index for $K$ clients. 
For the $k$-th client, we define its loss function by,
\begin{equation}
   F_k(w)= \Ebb_{z\sim \mathrm{data}_k}[f_k(w,z)] 
\end{equation}
where  $\mathrm{data}_k$ is the client's data distribution.
For different clients, $\mathrm{data}_k$ can be heterogeneous. 
Recall in FedAVG, $F_k(w_i)=w_i$.

At each round $i$, we assume all $K$ clients participate in and contribute to updating the global model.
Following the notations used in Algorithm~\ref{alg:system}, \( E \) is the number of local epochs; \( \eta \) is the learning rate used in local models;  \( R \) is the number of maximum FL iterations; $c_1$ and $c_2$ are the predetermined hyperparameters to adjust contributors; $d$ is the dimension of $w$.

Before analyzing the convergence of \system, we first introduce the prior knowledge~\cite{iclr/ReddiCZGRKKM21,icml/ZhangCH0Y22} and common settings~\cite{nips/ZaheerRSKK18,icml/ReddiHSPS16,corr/KingmaB14} that we rely on. 
We adopt the standard assumption in Lemma~\ref{ass:lip_bound}~\cite{nips/ZaheerRSKK18,icml/ReddiHSPS16,corr/KingmaB14} for nonconvex optimization as $F_k$ may be nonconvex.
$\nabla F_k(x)$ is defined as the computed gradients of each client.
As for federated optimization, we consider the setting of bounded variance and gradients~\cite{mlsys/LiSZSTS20,iclr/ReddiCZGRKKM21,icml/ZhangCH0Y22} as in Lemmas~\ref{lem:bvar},\ref{lem:bvarg} and Lemmas~\ref{lem:bgra}, which are widely adopted for heterogeneity analysis. 

\begin{lemma}[Lipschitz Gradient]
\label{ass:lip_bound}
    The function $F_k$ is $L$-smooth for any $k\in[K]$ such that, 
    \begin{equation}
        \|\nabla F_k(x)- \nabla F_k(y)\| \leq L\| x-y\|,\quad \forall\  x, y \in \Rbb^d
    \end{equation}
\end{lemma}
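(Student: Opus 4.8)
The statement of Lemma~\ref{ass:lip_bound} is a structural regularity condition rather than a consequence of the earlier results, so ``proving'' it here means verifying that the concrete GNIDS loss class used by \system{} satisfies $L$-smoothness on the region of parameter space explored during federated training. My plan is therefore to reduce the claim to a uniform bound on the per-sample Hessian and then discharge that bound using the specific structure of the encoder--decoder pipeline. Recall that each client's objective is $F_k(w)=\Ebb_{z\sim\mathrm{data}_k}[f_k(w,z)]$ with $f_k$ the cross-entropy loss of Equation~\ref{eq:dec_loss}, so if each $f_k(\cdot,z)$ is twice differentiable in $w$ then $\nabla^2 F_k(w)=\Ebb_z[\nabla^2_w f_k(w,z)]$ and, by Jensen's inequality applied to the convex spectral norm, $\|\nabla^2 F_k(w)\|\leq \Ebb_z\|\nabla^2_w f_k(w,z)\|$. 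Hence it suffices to exhibit a finite $L$ with $\sup_{z}\|\nabla^2_w f_k(w,z)\|\leq L$ over the relevant $w$, since the integral (mean-value) form then yields $\|\nabla F_k(x)-\nabla F_k(y)\|\leq L\|x-y\|$ directly.

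The second step is to bound that per-sample Hessian using the components described in Section~\ref{subsec:local_training}. The decoder terminates in a logistic sigmoid $\sigma$ (Equation~\ref{eq:decoder}), whose derivatives are globally bounded ($0\leq\sigma'\leq 1/4$ and $|\sigma''|$ bounded), and the node features $X^k_t$ are normalized and therefore bounded; these two facts control the output-layer and input-layer contributions to the Hessian by constants independent of $w$. The remaining contributions come from the trainable weight matrices inside the GCN/TGN encoder and the RNN/decoder, and here one would propagate the bound layer by layer: each layer's Jacobian and the derivative of that Jacobian are polynomial in the operator norms of the weight matrices, so a uniform bound on those norms yields a uniform bound on $\|\nabla^2_w f_k\|$.

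The main obstacle is that deep graph models are \emph{not} globally $L$-smooth over all of $\Rbb^d$: composing several trainable linear maps makes the Hessian grow polynomially with the weight magnitudes, so no single finite $L$ works on the whole space. My resolution is to restrict the analysis to a compact ball $\{w:\|w\|\leq B\}$, on which the continuous Hessian attains a finite supremum and $L$ can be taken as that supremum. This restriction is precisely what \system{} enforces operationally: the norm-bounding operator $\mathsf{NB}(\Delta w^{k}_{i+1})$ of Equation~\ref{eq_update_bounding} caps each per-round update, and the aggregate shift bound $c_1+c_2\omega$ from Theorem~\ref{the:diff_from_fedavg} keeps the global model from drifting away, so the iterates stay in a bounded region and the local smoothness constant is the one actually exercised. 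Thus the honest reading of Lemma~\ref{ass:lip_bound} is as a local smoothness assumption on this bounded domain, consistent with the nonconvex-optimization literature~\cite{nips/ZaheerRSKK18,icml/ReddiHSPS16} on which the subsequent convergence analysis of Theorem~\ref{the:convergence} relies.
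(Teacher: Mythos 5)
The paper does not actually prove this statement: despite being typeset as a ``Lemma,'' it is introduced in Appendix~\ref{app:proof} as a \emph{standard assumption} adopted from the cited nonconvex-optimization literature (``We adopt the standard assumption in Lemma~\ref{ass:lip_bound}\ldots''), and no argument is given that the GNIDS losses satisfy it. Your proposal therefore goes strictly beyond the paper by attempting a verification for the concrete encoder--decoder loss class, and your overall strategy --- reduce $L$-smoothness to a uniform per-sample Hessian bound via Jensen and the mean-value form, observe that global smoothness fails for compositions of trainable linear maps, and recover a finite $L$ on a compact ball that the norm-bounded iterates provably stay inside --- is the right way to make the assumption honest. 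Two caveats remain. First, the lemma as stated quantifies over all $x,y\in\Rbb^d$, so your reading necessarily weakens it to local smoothness on the trajectory's bounded region; that is consistent with how such assumptions are used in practice, but you should say explicitly that the global statement is false for these models and that the convergence proof of Theorem~\ref{the:convergence} only ever invokes smoothness along the iterates. Second, your Hessian argument presumes $f_k(\cdot,z)$ is twice differentiable, which fails if the GCN/TGN encoder uses ReLU-type activations (the usual choice); in that case $\nabla F_k$ need not even exist everywhere, and one must either assume smooth activations (e.g., the sigmoid/tanh gates of the GRU) or retreat to the weaker statement that the expectation over $z$ smooths out the kinks, which requires an argument about the data distribution you have not supplied. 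Neither caveat invalidates your reduction, but both should be flagged if the lemma is to be read as a verified property rather than an imported hypothesis.
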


\begin{lemma}[Bounded Local Variance] 
\label{lem:bvar}
The function $F_k$ has $\sigma_{\local}$-bounded local variance such that,
\begin{equation}
    \Ebb[\|\nabla [f_k(w,z)]_j-[\nabla F_k(w)]_j \|]=\sigma_{\local,j}^2
\end{equation}
 for all $w\in\Rbb^d, j\in [d]$, and $k\in[K]$. 
\end{lemma}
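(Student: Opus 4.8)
The plan is to treat this statement the way it actually functions in the stochastic-optimization literature on which the paper draws: as the definition of the per-coordinate local gradient variance, paired with a finiteness claim that must be anchored to mild regularity on $f_k$. The only substantive content to establish is therefore (i) that the stochastic gradient is an unbiased estimator of $\nabla F_k$, so that the quantity inside the expectation is a genuine mean-zero fluctuation, and (ii) that its first moment is finite, allowing us to name it $\sigma_{\local,j}^2$ and assert the displayed equality.

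First I would establish unbiasedness. Since $F_k(w)=\Ebb_{z\sim\mathrm{data}_k}[f_k(w,z)]$, I would differentiate under the expectation to obtain $\nabla F_k(w)=\Ebb_z[\nabla f_k(w,z)]$. This interchange is licensed by the Leibniz / dominated-convergence rule, which asks that $f_k(\cdot,z)$ be differentiable for almost every $z$ and that $\|\nabla f_k(w,z)\|$ admit an integrable envelope locally uniformly in $w$. The $L$-smoothness already posited in Lemma~\ref{ass:lip_bound} lets a single integrable bound at one base point propagate to a local envelope, so the interchange is valid. Consequently $\Ebb_z\!\big[[\nabla f_k(w,z)]_j-[\nabla F_k(w)]_j\big]=0$ for every coordinate $j\in[d]$, i.e. the term whose magnitude is being averaged is centered, which is what makes it legitimate to interpret its expected size as a variance-type quantity.

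Next I would \emph{define} $\sigma_{\local,j}^2:=\Ebb\big[\big\|[\nabla f_k(w,z)]_j-[\nabla F_k(w)]_j\big\|\big]$ and verify this is well-posed. By the triangle inequality the right-hand side is bounded by $\Ebb\big[\,\big|[\nabla f_k(w,z)]_j\big|\,\big]+\big|[\nabla F_k(w)]_j\big|$, and both summands are finite exactly when the stochastic gradient has a finite first moment under $\mathrm{data}_k$ — the very integrability already invoked to justify the interchange above. The stated equality then holds by construction, uniformly over $w\in\Rbb^d$, $j\in[d]$, and $k\in[K]$, which is precisely the form used downstream in the convergence analysis.

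The main obstacle is conceptual rather than computational: this is a standing hypothesis, not a fact derivable from more primitive structure. Without an explicit moment or boundedness assumption on $\nabla f_k(w,z)$ over the data distribution, finiteness of $\sigma_{\local,j}^2$ cannot be proved and must be assumed. My proposal therefore reduces to isolating the minimal regularity — almost-sure differentiability of $f_k(\cdot,z)$ together with a finite first moment of the stochastic gradient — under which the equation is simultaneously meaningful and valid, and to making explicit that the paper, like the references it cites for heterogeneity analysis, adopts bounded local variance as a modeling assumption rather than establishing it from weaker conditions.
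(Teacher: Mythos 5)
Your reading is exactly right: the paper never proves this statement at all — it is introduced in the appendix as ``prior knowledge'' and a ``common setting'' adopted from the federated-optimization literature (the works of Reddi et al.\ and Zhang et al.\ that the paper cites), despite being wrapped in a lemma environment, so there is no proof to diverge from. Your treatment of it as a definition-plus-regularity hypothesis (unbiasedness of the stochastic gradient via differentiation under the expectation, finite first moment for well-posedness, and the explicit acknowledgment that finiteness must be assumed rather than derived) is consistent with, and somewhat more careful than, the paper's own handling.
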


\begin{lemma}[Bounded Global Variance] 
\label{lem:bvarg}
The function $F_k$ has $\sigma_{\globalsf}$-bounded global variance such that,
\begin{equation}
    \frac{1}{K} \sum_{k=1}^{K}\|\nabla [F_k(w)]_j-[\nabla f(w)]_j\|\leq \sigma_{\globalsf,j}^2
\end{equation}
    for all $w\in\Rbb^d$ and $j\in [d]$.  
\end{lemma}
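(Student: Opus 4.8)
\textbf{Proof proposal for Lemma~\ref{lem:bvarg}.}
The plan is to read the stated inequality as the defining property of the global-variance constant $\sigma_{\globalsf}$ and then verify that such a finite constant actually exists under the regularity already in force. The first observation I would make is structural: because the global objective is the uniform average $f(w)=\frac{1}{K}\sum_{k=1}^{K}F_k(w)$, linearity of the gradient gives $\nabla f(w)=\frac{1}{K}\sum_{k=1}^{K}\nabla F_k(w)$. Hence, for each coordinate $j\in[d]$, the left-hand side is exactly the coordinate-wise mean absolute deviation of the client gradients about their own average. It is therefore a \emph{measure} of cross-client heterogeneity (non-IIDness), not an inequality that could fail for an individual fixed $w$; what needs justification is only that this deviation is uniformly controlled over $w$.

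With the identity $[\nabla f(w)]_j=\frac{1}{K}\sum_{k'}[\nabla F_{k'}(w)]_j$ in hand, the next step is to exhibit a uniform-in-$w$ bound. I would set $\sigma_{\globalsf,j}^{2}:=\sup_{w\in\Rbb^{d}}\frac{1}{K}\sum_{k=1}^{K}\|[\nabla F_k(w)]_j-[\nabla f(w)]_j\|$, so that the claimed inequality holds by construction once this supremum is shown finite. To establish finiteness I would invoke the uniform gradient bound standard in the federated heterogeneity analyses of~\cite{iclr/ReddiCZGRKKM21,mlsys/LiSZSTS20,icml/ZhangCH0Y22} (the bounded-gradient condition of Lemma~\ref{lem:bgra}): if $\|[\nabla F_k(w)]_j\|\le G_j$ for all $k$ and $w$, then the triangle inequality yields $\|[\nabla F_k(w)]_j-[\nabla f(w)]_j\|\le 2G_j$, and averaging over $k$ gives the crude but sufficient estimate $\sigma_{\globalsf,j}^{2}\le 2G_j<\infty$. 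A compact-domain argument with continuous gradients would reach the same conclusion if one prefers not to assume the explicit gradient bound.

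The main subtlety, and the reason this result is best viewed as a standing assumption adopted from prior work rather than a free-standing theorem, lies precisely in the finiteness of the supremum. Without some uniform control on the client gradients, the coordinate-wise deviation can grow unboundedly over $w\in\Rbb^{d}$, so the constant $\sigma_{\globalsf}$ need not exist; the genuine content of the lemma is the identification of the minimal regularity (uniform gradient boundedness, or smoothness on a bounded region) under which the heterogeneity constant is well-defined. Once finiteness is granted, the inequality itself is immediate, and I would close by noting that this matches the global-variance hypothesis consumed downstream by the convergence analysis of Theorem~\ref{the:convergence}.
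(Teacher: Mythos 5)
Your reading of the statement is the right one, and your argument is sound; the key fact to know is that the paper itself offers \emph{no proof at all} of this statement. Despite being typeset as a ``Lemma,'' the bounded-global-variance condition is introduced in Appendix~\ref{app:proof} purely as an adopted hypothesis: the authors state that they ``consider the setting of bounded variance and gradients \ldots{} as in Lemmas~\ref{lem:bvar},~\ref{lem:bvarg} and~\ref{lem:bgra}, which are widely adopted for heterogeneity analysis,'' citing the federated-optimization literature, and then consume it directly inside the proof of Theorem~\ref{the:convergence}. So there is no paper proof to match, and your identification of the statement as a standing assumption is exactly how the paper uses it.

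Your additional derivation is correct and is a genuine observation the paper does not make: using $f(w)=\frac{1}{K}\sum_{k=1}^{K}F_k(w)$, hence $\nabla f(w)=\frac{1}{K}\sum_{k=1}^{K}\nabla F_k(w)$, together with the bounded-gradient condition of Lemma~\ref{lem:bgra} (passed from the stochastic bound $|[\nabla f_k(w,z)]_j|\le G$ to $|[\nabla F_k(w)]_j|\le G$ via Jensen), the triangle inequality gives the uniform estimate $2G$ for the coordinate-wise deviation, so a finite $\sigma_{\globalsf,j}$ exists and Lemma~\ref{lem:bvarg} is logically a \emph{consequence} of Lemma~\ref{lem:bgra} rather than an independent assumption. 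What your route costs is quantitative sharpness, which is precisely why the literature (and this paper) states the two conditions separately: $\sigma_{\globalsf}$ is meant to be a heterogeneity measure that can be far smaller than the worst-case $2G$ (it vanishes for IID clients), and the $O(\eta^2E^2\cdot\sigma^2_{\globalsf})$ term in Theorem~\ref{the:convergence} is only informative because it is expressed in this sharper constant rather than the crude gradient bound; substituting $2G$ there would wash out the heterogeneity dependence the theorem is trying to exhibit. One cosmetic caveat: the per-coordinate constants and the unit mismatch in your bound (a first-power deviation controlled by a ``squared'' constant $\sigma^2_{\globalsf,j}$) simply inherit the paper's own loose notation in Lemmas~\ref{lem:bvar} and~\ref{lem:bvarg}, so no correction is needed on your side.
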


\begin{lemma}[Bounded Gradients]
    \label{lem:bgra}
    The gradients of function $f_k(w,z)$ is $G$-bounded such that,
    \begin{equation}
        |[\nabla f_k(w,z)]_j\leq G|, \quad \forall\  j\in[d]
    \end{equation}
for any $k\in[K], w\in\Rbb^d$ and $z\sim \mathrm{data}_k$.
\end{lemma}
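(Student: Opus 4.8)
The plan is to establish the \textbf{Bounded Gradients} property not as a free-standing assumption but as a consequence of the operating conditions of \system over its finite training horizon. I first note that this property does \emph{not} follow from the Lipschitz-gradient condition of Lemma~\ref{ass:lip_bound} alone: a smooth function such as a quadratic has $L$-Lipschitz gradient yet unbounded gradient. Hence the gradient bound must come from compactness of the region actually visited during training, combined with continuity of $\nabla f_k$. The strategy is therefore to confine both the iterates $w$ and the data samples $z$ to compact sets and then invoke the extreme value theorem.

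The core argument proceeds in three steps. \emph{Step (i): compact data domain.} Each client's snapshots $[\mathcal{G}^k_1,\dots,\mathcal{G}^k_T]$ are drawn from a finite collection with bounded node and edge features (degrees, types, frequencies), so $z \sim \mathrm{data}_k$ ranges over a compact, indeed finite, set $\mathcal{Z}_k$. \emph{Step (ii): bounded iterate trajectory.} The norm-bounding operator $\mathsf{NB}(\cdot)$ caps each per-round client contribution at magnitude $M$ in Equation~\ref{eq_update_bounding}, so the global model advances by at most a fixed amount per round; over the finite horizon of $R$ rounds the trajectory stays inside a ball $B \subset \Rbb^d$ whose radius grows at most linearly in $R$. \emph{Step (iii): continuity and extremum.} By Lemma~\ref{ass:lip_bound} each $f_k(\cdot,z)$ is $C^1$, hence $(w,z)\mapsto \nabla f_k(w,z)$ is continuous, and a continuous function on the compact product $B\times\mathcal{Z}_k$ attains a finite maximum. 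Setting $G := \max_{k,j}\sup_{w\in B,\,z\in\mathcal{Z}_k}|[\nabla f_k(w,z)]_j|$ then gives the claimed coordinate-wise bound.

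As an independent cross-check that also makes $G$ explicit, I would use the concrete GNIDS loss. The cross-entropy objective of Equation~\ref{eq:dec_loss} is taken over sigmoid edge scores, so its derivative with respect to each decoder logit is the prediction residual $\sigma(\cdot)-\text{label}\in(-1,1)$; back-propagating this residual through the decoder, the temporal module and the encoder expresses each coordinate of $\nabla f_k(w,z)$ as a finite sum of products of bounded activations and weight entries, the weights being bounded precisely because of Step (ii). This yields a closed-form, finite bound and confirms the abstract $G$ is realizable in practice.

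The main obstacle is tightening Step (ii) to account for the fact that, \emph{within} a round, each client takes $E$ local epochs of SGD \emph{before} $\mathsf{NB}$ is applied, so the per-client excursion during local training is not directly capped by $M$. I would close this gap by an inductive one-step argument: assuming $w$ lies in $B$, bound the single local update $\eta\nabla\mathcal{L}$ using a provisional gradient bound and the learning rate $\eta$, so that after $E$ steps the iterate remains in a slightly enlarged compact ball $B'$, and then run Step (iii) on $B'$. A secondary subtlety is the per-sample versus expected gradient distinction, since the lemma concerns $\nabla f_k(w,z)$ rather than $\nabla F_k(w)$; the compactness of $\mathcal{Z}_k$ from Step (i) resolves this by furnishing a bound uniform in $z$.
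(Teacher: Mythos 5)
The first thing to note is that the paper does not prove this statement at all. Despite the ``Lemma'' label, it is one of the standard operating \emph{assumptions} for the convergence analysis: the appendix introduces Lemmas~\ref{ass:lip_bound}--\ref{lem:bgra} as ``prior knowledge and common settings'' adopted from the federated-optimization literature, and Theorem~\ref{the:convergence} explicitly begins by assuming all of them. So bounded gradients plays the role of an axiom about the loss/model class, not a derived fact, and there is no paper proof to compare yours against. Your attempt to \emph{derive} it from the system's operating conditions (finite data, norm bounding, finite horizon) is therefore a genuinely different route --- and one that cannot fully succeed for the statement as written.

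Here is the concrete gap. The lemma quantifies over all $w \in \Rbb^d$: the bound must hold at \emph{every} point of parameter space, for every $z$. Your own opening observation (a quadratic has $L$-Lipschitz gradient yet unbounded gradient) already shows that no compactness-of-trajectory argument can deliver such a global statement; what your three steps yield is at best a bound on the compact set actually visited during training, which is a strictly weaker claim. That weaker claim may be what the analysis ``morally'' needs, since the convergence proof only evaluates stochastic gradients $g_i^{k,e}$ at realized local iterates, but substituting it into the paper's argument would force re-checking every place $G$ appears --- notably the hypothesis $M \geq \eta E G$ of Theorem~\ref{the:convergence} --- with a constant that is now trajectory-dependent, hence random and not known a priori. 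Separately, your Step~(ii) is circular as sketched: the local excursions over $E$ epochs occur \emph{before} $\mathsf{NB}$ is applied, and you bound them ``using a provisional gradient bound,'' which is exactly the quantity being established. This can be repaired for finite $E$ and $R$ by an interleaved induction (extreme value theorem on a compact ball gives a gradient bound there; that bound caps the next SGD step; enlarge the ball; repeat), but the resulting $G$ then grows with $R$, $E$, and $\eta$, unlike the horizon-independent constant the paper's analysis assumes. The cleanest resolution is the paper's own: accept bounded gradients as an assumption on the function class, standard in FL analyses, rather than a consequence of Lemma~\ref{ass:lip_bound} plus norm bounding.
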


\begin{proof}
    To analyze the convergence of \system, we need to express the global model difference $w_{i+1}-w_i$  between any consecutive rounds $i,i+1$.
    By Equation~\ref{eq_update_bounding}, the global model difference can be calculated by,
    \begin{equation*}
        w_{i+1}-w_i = \frac{1}{K}\sum^{K}_{k=1} (c_1 \times \jsequ^k + c_2 \times S^k_i \times D^k_i) \times  \mathsf{NB}(\Delta w^{k}_{i+1})
    \end{equation*}

At the $i$-th iteration, the learning rate of updating $w_i+1$ is $c_1 \times \jsequ + c_2 \times S^k_i \times D^k_i$ at each client.
Using Lipschitz smoothness, we get,
    \begin{equation*}\label{equ:main_equ}
    \begin{aligned}
       & \Ebb[f(w_{i+1})]\leq f(w_t) \\&+\left\langle\nabla f(w_t),\Ebb[\frac{1}{K}\sum^{K}_{k=1} (c_1 \jsequ^k + c_2  S^k_i D^k_i)  \mathsf{NB}(\Delta w^{k}_{i+1})]\right\rangle\\
        &+\frac{L}{2}\Ebb\left[\|\frac{1}{K}\sum^{K}_{k=1} (c_1  \jsequ^k + c_2  S^k_i  D^k_i)  \mathsf{NB}(\Delta w^{k}_{i+1})\|\right]\\
    \end{aligned}
    \end{equation*}
In Theorem~\ref{the:diff_from_fedavg}, we know $r^k_i=c_1 \times \jsequ^k + c_2 \times S^k_i \times D^k_i\leq c_1+c_2\omega$ for any positive $c_1,c_2$.
Norm bounding is defined by $\mathsf{NB}(\Delta w^{k}_{i+1})=\frac{\Delta w^{k}_{i+1}}{\max(1, \lVert \Delta w^{k}_{i+1} \rVert_2 / M)}$~\cite{sun2019can}. 
The bounding ensures that if $\| \Delta w^{k}_{i+1} \|_2\leq M$, $w^{k}_{i+1}$ is preserved for further computation; otherwise, if $\| \Delta w^{k}_{i+1} \|_2> M$, the norm of $\Delta w^{k}_{i+1}$ is equal to $M$.
For simplifying notations, we define,
\begin{equation*}
    \alpha^k_i=\frac{(c_1 \times \jsequ^k + c_2 \times S^k_i \times D^k_i)M}{\max (M, \eta\|\sum_{e=1}^{E}g^{k,e}_{i}\|)}
\end{equation*}
where $g^{k,e}_{i}$ is the stochastic gradient computed by client $k$ at the $e$-th local epoch. 
Then, we define $\Tilde{\alpha}^k_i$ to take the math expectation for all possible random variables in local epochs,
\begin{equation*}
    \Tilde{\alpha}^k_i=\frac{(c_1 \times \jsequ^k + c_2 \times S^k_i \times D^k_i)M}{\max (M, \eta\|\Ebb [\sum_{e=1}^{E}g^{k,e}_{i}]\|)}
\end{equation*}
 We replace with $r_i^k$ to focus on the analysis of norm bounding. That is, Inequality~\ref{equ:main_equ} can be simplified to be,
    \begin{equation*}
    \begin{aligned}
        \Ebb[f(w_{i+1})]&\leq f(w_t) \\
        &+\left\langle\nabla f(w_t),\Ebb[\frac{1}{K}\sum^{K}_{k=1} r_i^k \times  \mathsf{NB}(\Delta w^{k}_{i+1})]\right\rangle\\
        &+\frac{L}{2}\Ebb\left[\|\frac{1}{K}\sum^{K}_{k=1}r_i^k \times  \mathsf{NB}(\Delta w^{k}_{i+1})\|\right]\\
    \end{aligned}
    \end{equation*}
Using $\alpha_i^k,\Tilde{\alpha}^k_i$, the updated gradient $r_i^k \times  \mathsf{NB}(\Delta w^{k}_{i+1})$  can be calculated by,
\begin{equation*}
\begin{aligned}
    r_i^k  & \times\mathsf{NB}(\Delta w^{k}_{i+1}) = -\eta\sum_{e=1}^{E}g_i^{k,e}\cdot \alpha^k_i\\
    =&-\eta\sum_{e=1}^{E}g_i^{k,e}\cdot (\alpha^k_i-\Tilde{\alpha}^k_i+\Tilde{\alpha}^k_i)\\
    &=\left(-\eta\sum_{e=1}^{E}g_i^{k,e}\cdot (\alpha^k_i-\Tilde{\alpha}^k_i)\right)+ \left(-\eta\sum_{e=1}^{E}g_i^{k,e}\cdot \Tilde{\alpha}^k_i\right)
\end{aligned}
\end{equation*}
Let $\overline{\alpha}_i=\frac{1}{K}\sum_{k=1}^K\Tilde{\alpha}^k_i$. 
As for the first-order term in  Inequality~\ref{equ:main_equ}, we thus drive the following relations,
\begin{equation*}
\begin{aligned}
    &\left\langle\nabla f(w_i),\Ebb\left[\frac{1}{K}\sum^{K}_{k=1} r_i^k \times  \mathsf{NB}(\Delta w^{k}_{i+1})\right]\right\rangle\\
    =&\left\langle\nabla f(w_i),\Ebb\left[\frac{1}{K}\sum^{K}_{k=1}  \left(-\eta\sum_{e=1}^{E}g_i^{k,e}\cdot (\alpha^k_i-\Tilde{\alpha}^k_i)\right)\right]\right\rangle \\&+\left\langle\nabla f(w_i),\Ebb\left[\frac{1}{K}\sum^{K}_{k=1} \left(-\eta\sum_{e=1}^{E}g_i^{k,e}\cdot (\Tilde{\alpha}^k_i-\overline{\alpha}_i)\right)\right]\right\rangle 
    \\&+\left\langle\nabla f(w_i),\Ebb\left[\frac{1}{K}\sum^{K}_{k=1} \left(-\eta\sum_{e=1}^{E}g_i^{k,e}\cdot \overline{\alpha}_i\right)\right]\right\rangle   \\
\end{aligned}
\end{equation*}
 The first two terms are essentially statistical bias. 
 Thus, we reduce to the third term to analyze the bound of the first-order term.
 By taking math expectation over all $K$ clients and defining $\Ebb[g_i^{k,e}]=\nabla F_k(w^{k,e}_i)$, we have,
 
\begin{equation*}
\begin{aligned}
    &\Ebb\left[
    \left\langle\nabla f(w_i),\frac{1}{K}\sum^{K}_{k=1} \left(-\eta\sum_{e=1}^{E}g_i^{k,e}\cdot \overline{\alpha}_i\right)\right\rangle \right]\\
    &=\Ebb\left[
    \left\langle\nabla f(w_i),\frac{1}{K}\sum^{K}_{k=1} \left(-\eta\sum_{e=1}^{E}(g_i^{k,e}-\nabla F_k(w^{k,e}_i))\cdot \overline{\alpha}_i\right)\right\rangle \right]\\ 
    &\quad+ \Ebb\left[
    \left\langle\nabla f(w_i),\frac{1}{K}\sum^{K}_{k=1} \left(-\eta\sum_{e=1}^{E}\nabla F_k(w^{k,e}_i)\cdot \overline{\alpha}_i\right)\right\rangle \right]\\
\end{aligned}
\end{equation*}
Now, we reduce to the last term to analyze the convergence of the
first-order term, which is equivalent to the first-order analysis in  \cite[Theorem~3.1]{icml/ZhangCH0Y22}.
Combine with Theorem~\ref{the:diff_from_fedavg},  Theorem~3.1 \cite{icml/ZhangCH0Y22} and Lemma~3 in \cite{iclr/ReddiCZGRKKM21}, we get,
\begin{equation*}
    \begin{aligned}
       &\left\langle\nabla f(w_t),\Ebb[\frac{1}{K}\sum^{K}_{k=1} r_i^k \times  \mathsf{NB}(\Delta w^{k}_{i+1})]\right\rangle\\
    &\leq  5L^2E^2\eta^2\sigma_{\local}^2 (c_1+c_2\omega)\\
    &\quad +30L^2 E^3\eta^2(c_1+c_2\omega)(\sigma^2_{\globalsf}+\|\nabla f(w_i))\|^2)  \\
    \end{aligned}
\end{equation*}
where $\sigma^2_{\local}=\sum_{j=1}^{d} \sigma^2_{\local,j}$ and $\sigma^2_{\globalsf}=\sum_{j=1}^{d} \sigma^2_{\globalsf,j}$.
The analysis of the second-order term is similar to the first-order term. 
According to  Theorem~3.1 in \cite{icml/ZhangCH0Y22} and  Theorem~1 in \cite{iclr/ReddiCZGRKKM21}, we have, 
\begin{equation*}
\label{equ:second-order}
    \begin{aligned}
      & \frac{L}{2}\Ebb\left[\|\frac{1}{K}\sum^{K}_{k=1}r_i^k \times  \mathsf{NB}(\Delta w^{k}_{i+1})\|\right]\\
       & \leq 5\sigma^2_{\local}K\eta^4 \overline{\alpha}_iL^3E^2(c_1+c_2\omega)\\
      &\quad +2\sigma^2_{\globalsf}K\eta^2\overline{\alpha}_iE^3(15L^3\eta^2+L)(c_1+c_2\omega)\\
      &\quad +2K\eta^2\overline{\alpha}_iE^3(15L^3\eta^2+L)(c_1+c_2\omega)\|\nabla f(w_i)\|^2\\
    \end{aligned}
\end{equation*}
Combining all equations above /and constrain~\cite{iclr/ReddiCZGRKKM21}
  $$\eta\leq\min\{\frac{\sqrt{K}}{(c_1+c_2\omega)\sqrt{48E^3}}, \frac{K}{6EL(c_1+c_2\omega)(K-1)},\frac{1}{\sqrt{60}EL}\}$$, 
we have,

\begin{equation*}
\begin{aligned}
    &\Ebb[f(w_{i+1})]\leq f(w_{i}) - \frac{(c_1+c_2\omega)\eta\overline{\alpha}_iE}{4}\|\nabla f(w_i)\|^2\\
&+\left(\frac{5(c_1+c_2\omega)\eta^3\overline{\alpha}_i}{2}(1+\frac{12\eta (c_1+c_2\omega)}{K})L^2E^2\right)\sigma^2_{\local}\\
&+\left(\frac{3L}{K}(c_1+c_2\omega)^2\eta^2\overline{\alpha}_i^2E\right)\sigma^2_{\local}\\
&+ \frac{30(c_1+c_2\omega)\eta^3\overline{\alpha}_i}{2}(1+\frac{12\eta (c_1+c_2\omega)}{K})L^2E^3\sigma_{\globalsf}^2\\
& + \left\langle \nabla f(w_i),\Ebb\left[\frac{1}{K}\sum^{K}_{k=1}  \left(-\eta\sum_{e=1}^{E}g_i^{k,e}\cdot (\alpha^k_i-\Tilde{\alpha}^k_i)\right) \right]\right\rangle\\
&+ {\left\langle\nabla f(w_i),\Ebb\left[\frac{1}{K}\sum^{K}_{k=1} \left(-\eta\sum_{e=1}^{E}g_i^{k,e}\cdot (\Tilde{\alpha}^k_i-\overline{\alpha}_i)\right)\right]\right\rangle }\\
&+\frac{3L}{2}(c_1+c_2\omega)^2\Ebb\left[\left\|\frac{1}{K}\sum^{K}_{k=1}  \left(-\eta\sum_{e=1}^{E}g_i^{k,e}\cdot (\alpha^k_i-\Tilde{\alpha}^k_i)\right)\right\|^2\right]\\
&+\frac{3L}{2}(c_1+c_2\omega)^2\Ebb\left[\left\|\frac{1}{K}\sum^{K}_{k=1}  \left(-\eta\sum_{e=1}^{E}g_i^{k,e}\cdot (\alpha^k_i-\Tilde{\alpha}^k_i)\right)\right\|^2\right]\\
\end{aligned}
\end{equation*}
After federated training $R$ rounds, we get the expectation over gradients norm bounding, 
\begin{equation*}
\begin{aligned}
    &\frac{1}{R}\sum_{i=1}^{R}\Ebb[\overline{\alpha}_1\|\nabla f(w_i)\|^2]\\
    &\leq\underbrace{\frac{4}{(c_1+c_2\omega)R\eta E}\left(\Ebb\left[f(w_1)\right]-\Ebb\left[f(w_{R+1})\right]\right)}_{O(1/{(R\eta E(c_1+c_2\omega))})} + G\\
    & + \underbrace{\frac{10\eta^2L^2EK+120(c_1+c_2\omega)\eta^3L^2E}{K}\sigma_{\local}^2(\frac{1}{R}\sum_{i=1}^{R}\overline{\alpha_i})}_{<O((c_1+c_2\omega)\eta/K*\sigma^2_{\local})}\\
    &+\underbrace{\frac{12(c_1+c_2\omega)L}{K}\eta \sigma_{\local}^2(\frac{1}{R}\sum_{i=1}^{R}(\overline{\alpha_i})^2)}_{O((c_1+c_2\omega)\eta/K\cdot\sigma^2_{\local})}\\
    & + \underbrace{\frac{60E(\eta^2L^2EK+12(c_1+c_2\omega)\eta^3L^2E)}{K}\sigma_{\globalsf}^2(\frac{1}{R}\sum_{i=1}^{R}\overline{\alpha_i})}_{O(\eta^2E^2\cdot\sigma^2_{\globalsf})}\
\end{aligned}
\end{equation*}
If the DP-style noise is added, the upper bound additionally involves $O((c_1+c_2\omega)d/(\eta EK)\cdot\sigma_{\mathsf{DP}})$, where $\sigma_{\mathsf{DP}}=M_{\mathsf{qs}}\sigma$ in Equation~\ref{eq_dp}.
\end{proof}

\revisednew{
\subsubsection{Proof of Theorem~\ref{the:dpEntente}}
The goal is to show the entire system  of $\system$ achieves $(K\frac{M_{\mathsf{qs}}}{\sigma}\sqrt{R\log(1/\delta)},\delta)$-DP guarantee.
Each client, in every round, adds independent Gaussian noise with variance ($\sigma^2$) to their sensitive data. By the classic Gaussian mechanism, adding this amount of noise ensures each output satisfies (($\epsilon_0, \delta$))-DP~\cite{corr/abs-2210-00597} for single-shot queries, where ($\epsilon_0 = O\left(\frac{M_{qs}}{\sigma}\sqrt{\log(1/\delta)}\right)$). Here, ($M_{qs}$) captures the local sensitivity due to quantization and sparsification.
Since $K$ clients participate in each round and the adversary could potentially gain access to all outputs, we use the basic composition theorem~\cite{corr/abs-2210-00597}. 
Accordingly, the total privacy loss across $K$ parallel mechanisms is bounded by $K\epsilon_0$, so for $K$ clients,
$\epsilon_1=K\epsilon_0 = O\left(\frac{KM_{\mathsf{qs}}}{\sigma}\sqrt{\log(1/\delta)}\right)$.
This step relies on the composition property of DP, and assumes clients' outputs are independent.
 When the protocol proceeds for $R$ rounds, we need to further compose the privacy loss using the advanced composition theorem~\cite{corr/abs-2210-00597}, which more tightly bounds the cumulative privacy. 
 This theorem states that, when composing $R$ mechanisms each giving $(\epsilon_1, \delta)$-DP, the overall privacy is $\epsilon = O\left(\epsilon_1\sqrt{R\log(1/\delta)}\right)$.
Plugging in from above, we get
$\epsilon = O\left(\frac{KM_{\mathsf{qs}}}{\sigma}\sqrt{R\log(1/\delta)}\right)$, which is what Theorem 3 claims.
 }

\end{document}